 \newtheorem{thm}{Theorem}[section]
 \newtheorem{cor}[thm]{Corollary}
 \newtheorem{lem}[thm]{Lemma}
 \newtheorem{prop}[thm]{Proposition}
 \theoremstyle{definition}
 \theoremstyle{remark}
 \newtheorem{rem}[thm]{Remark}
 \newtheorem{exa}[thm]{Example}
 \newcommand{\PP}{\mathbb{P}}
\newcommand{\bm}{\bibitem}
\newcommand{\no}{\noindent}
\newcommand{\be}{\begin{equation}}
\newcommand{\ee}{\end{equation}}
\newcommand{\bea}{\begin{eqnarray}}
\newcommand{\bes}{\begin{subequations}}
\newcommand{\ees}{\end{subequations}}
\newcommand{\bgt}{\begin{gather}}
\newcommand{\egt}{\begin{gather}}
\newcommand{\eea}{\end{eqnarray}}
\newcommand{\beaa}{\begin{eqnarray*}}
\newcommand{\eeaa}{\end{eqnarray*}}
\newcommand{\EE}{{\mathbb E}}
\newcommand{\RR}{{\mathbb R}}
\newcommand{\cal}{\mathcal}
\newcommand{\FF}{{\mathcal F}}
\begin{document}
\title[Martingale Schr\"{o}dinger bridges]{From (Martingale) Schr\"{o}dinger bridges to \\
a new class of Stochastic Volatility Models}
\author{Pierre Henry-Labord\`ere}
\address{Soci\'et\'e G\'en\'erale, Global markets Quantitative Research}
\address{CMAP, Ecole Polytechnique}
\email{pierre.henry-labordere@sgcib.com} \subjclass{} \keywords{}
\keywords{Schr\"{o}dinger bridge, stochastic control, Sinkhorn algorithm, stochastic volatility model, conditioned SDEs}

\maketitle
\begin{abstract} Following closely the construction of the Schr\"{o}dinger bridge, we build a new class of Stochastic Volatility Models exactly calibrated to market instruments such as for example Vanillas, options on realized variance or VIX options. These models differ strongly from the well-known local stochastic volatility models, in particular the instantaneous volatility-of-volatility of the associated naked SVMs is not modified, once calibrated to market instruments. They can be interpreted as a martingale version of the Schr\"{o}dinger bridge. The numerical calibration is  performed  using a dynamic-like version of the  Sinkhorn algorithm. We finally highlight a striking relation with Dyson non-colliding Brownian motions.
\end{abstract}

\section{Introduction}

\subsection{Motivation: a new class of SVMs}
\no Let us consider a stochastic volatility model (in short SVM\footnote{Not to be confused with Support Vector Machine!}) defined under a (risk-neutral)-measure $\PP^0$ by a time-homogenous It\^o diffusion:
\bea dS_t&=&S_t a_t dW^0_t \label{nakedSVM}\\
da_t&=&b(a_t)dt+ \sigma(a_t) dZ^0_t,  \quad d\langle Z^0, W^0\rangle_t=\rho dt \nonumber \eea under which $S_t$ is a (local) $\PP^0$-martingale (true martingale under proper assumptions on the coefficients $b(\cdot)$ and $\sigma(\cdot)$ and the correlation $\rho$). Here $ W^0_t$ and $Z^0_t$ are two correlated $\PP^0$-Brownian motions. As well-known examples, one can cite the Heston model, the SABR model and the Bergomi model (see \cite{ber} for an exhaustive list of examples). In the present paper,  we consider mainly one-factor SVMs  although the extension to the multi-dimensional setup  will be discussed. Moreover, for the sake of simplicity of notations, we have assume a zero rate. This can be trivially extended to a deterministic rate by considering the forward process. In practice, the volatility-drift  $b$ and the volatility-of-volatility $\sigma$ depend on some parameters (vol-of-vol, mean-reversion, ...) in addition to the spot/volatility correlation $\rho$. As depending on a finite number of parameters, this model is not perfectly calibrated (at $t=0$) to the market values $C_\mathrm{mkt}(T,K)$ of call options, with payoff $(S_T-K)^+$, for all maturities $T$ and for all strikes $K$, meaning that for all $(T,K) \in (0,T_{\max}] \times \RR_+$ almost everywhere:
\beaa \EE^{\PP^0}[(S_T-K)^+] \neq C_\mathrm{mkt}(T,K) \eeaa
The calibration to Vanillas can however be  achieved by {\it modifying} the dynamics of our SVM (under a measure $\PP$) into
\bea dS_t&=&\sigma(t,S_t) a_t dW_t  \label{LSVM} \\
da_t&=&b(a_t)dt+ \sigma(a_t) dZ_t,  \quad d\langle Z, W\rangle_t=\rho dt \nonumber \eea where we have added  a deterministic function $\sigma(t,S_t)$ of the time $t$ and the spot price $S_t$ on top of  the volatility $a_t$. This extension corresponds to the so-called local SVMs, first introduced in \cite{lip}. By a straightforward application of It\^o-Tanaka's lemma on the convex payoff $(S_t-K)^+$, one can then show (see Chapter 11  in \cite{phl1} for a detailed derivation) that this model is calibrated to Vanillas, i.e., $\EE^{\PP}[(S_T-K)^+] = C_\mathrm{mkt}(T,K)$ for all $(T,K) \in \RR_+^2$, if and only if
\bea  \sigma(t,S_t)^2 \EE^\PP[ a_t^2|S_t] = \sigma_\mathrm{loc}(t,S_t)^2 \label{universal} \eea  where $\sigma_\mathrm{loc}(t,K):=2{\partial_t C_\mathrm{mkt}(t,K) \over \partial_K^2 C_\mathrm{mkt}(t,K)}$ is the Dupire local volatility \cite{dup}. By injecting $\sigma$ from equation (\ref{universal}) into SDE (\ref{LSVM}), this leads to a non-linear McKean SDE:
\beaa dS_t&=&{\sigma_\mathrm{loc}(t,S_t) \over \sqrt{\EE^\PP[a_t^2|S_t]}}a_t dW_t  \eeaa The numerical simulation of such a nonlinear SDE can then be achieved efficiently using a particle method (or a PDE numerical scheme for the associated nonlinear Fokker-Planck PDE -- see \cite{phl1} for extensive details and references) and this is one of the main(/only) reason why this modification with a multiplicative function $\sigma$ has been considered by practitioners in mathematical finance. Despite this numerical efficiency, let us remark that in order to fit Vanillas (or equivalently prescribe marginals), we have  drastically modified the dynamics of the resulting instantaneous volatility $A_t:=\sigma(t,S_t) a_t/S_t$ which is now given under $\PP$ by ($\PP$ and $\PP^0$ are not equivalent  probability measures):
\beaa dS_t&=&S_t A_t dW_t \\
{dA_t \over A_t}&=&{ da_t \over a_t} +S^2_t \partial_S \left( \ln {\sigma(t,S_t) \over S_t}\right) A_t dW_t
+(\cdots)dt \eeaa  See the additional  term $S_t  \partial_S \left( \ln {\sigma(t,S_t) / S_t}\right) A_t dW_t$ in the diffusion of $A_t$. In the following paper, we explain how to {\it slightly} deform our naked SVM (as defined by SDE (\ref{nakedSVM})) in order to fit  Vanillas. This deformation consists in adding a drift $\lambda(t,S_t,a_t)$ to the volatility process, without modifying the  volatility-of-volatility as in LSVMs, in particular $\PP$ is equivalent  to $\PP^0$ here. The dynamics of our calibrated SVM reads now:
\beaa dS_t&=&S_t a_t dW_t \\
da_t&=&(b(a_t)+\lambda(t,S_t,a_t))dt+ \sigma(a_t) dZ_t,  \quad d\langle Z, W\rangle_t=\rho dt \eeaa \no \no Note that modeling the correct volatility-of-volatility $\sigma(a_t)$ is still a relevant subject and it is not considered in the present paper (see \cite{ber} for extensive discussions --  a relevance choice is for example to take a log-normal diffusion $\sigma(a)=\nu a$ as in the SABR or (one-factor) Bergomi model).

\no Our approach follows closely the construction of the so-called Schr\"{o}dinger bridge using an entropy penalty. A similar approach was explored in \cite{ave,ave1}. The martingality constraints seem however to have been unnoticed, resulting in pricing models that are not arbitrage-free.  This is confirmed in proposition 4 in \cite{ave1}, where the drift of the diffusion measure $\PP$ is computed and found to be different from the risk-free interest rate.

\subsection{Contents}
The contents of our paper is as follows: In the first section, as a toy model, we recall the construction of the Schr\"{o}dinger bridge \cite{leo}.  We then explicit the link with the theory of conditioned SDEs as developed in \cite{bau}.  In particular, we consider conditioned SDEs to have multiple fixed marginals at maturities $(t_i)_{1 \leq i\leq n}$. We then move on to mathematical finance and explain how to deform a  SVM in order to match some marginals (i.e., Vanilla options). From a mathematical point of view, the  Schr\"{o}dinger bridge  is now restricted to be a martingale and has fixed marginals.  The numerical algorithm for computing the drift $\lambda$ boils down to  the solution of a low-dimensional concave optimization with a (martingale) Sinkhorn algorithm. In the third section, we extend our construction  and explain how to calibrate path-dependent options. As a striking example, we consider SVM calibrated to options on variance depending on the quadratic variation $\langle S \rangle_T$ at some maturity $T$. Finally, in the last section, we highlight a striking relation with the theory of non-colliding diffusions, in particular reproduce the Dyson Brownian motion, related to GOE ensemble in random matrix theory.

\section{An appetizer: Schr\"{o}dinger bridges}

\no Let us consider  a standard $\PP^0$-Brownian motion:
\beaa dX_t= dW^0_t, \quad X_{t=0}:=X_0 \eeaa
\no In this section, as an appetizer,  we consider the problem of  adding a drift to $X_t$ such that the law of the new resulting process $\bar{X}_T$ at a maturity $T$ matches a marginal distribution $\mu$, i.e.,  $\bar{X}_T \sim \mu$.

\begin{rem}[mapping]
Note that if we use a mapping $\bar{X}_t:=f(W^0_t)$ where $f$ is a monotone function chosen such that $\bar{X}_T \sim \mu$ ($F_\mu$ is the cumulative distribution of $\mu$), i.e.,
\beaa F_\mu(x)=\EE^{\PP^0}[1_{W^0_T<f^{-1}(x)}]:={\cal N}(f^{-1}(x)) \Longleftrightarrow f(x):=F_\mu^{-1}\circ{\cal N}(x) \eeaa the volatility of $X_t$ (and its drift) will be modified according to:
\beaa d\bar{X}_t=\partial_x f(W^0_t) dW^0_t + {1 \over 2}\partial_{x}^2f(W^0_t) dt \eeaa This mapping can be seemed as an analog to our modification with local SVMs, see the modification of the diffusion of $W_t^0$.
\end{rem}
\no For use below, $W_t^0$ denotes a standard Brownian w.r.t to a probability measure       $\PP^0$.
\subsection{One marginal}
\begin{prop}[Schr\"{o}dinger, one marginal \cite{sch}] \label{thm1} Let us consider the static strictly concave optimization:
\bea P_1:=\sup_{f \in \mathrm{L}^1(\mu)} \{ -\EE^{\mu}[f] -\ln \EE^{\PP^0}[ e^{-f(W^0_T)}|W^0_0=X_0] \}\label{optP1} \eea and assume that $P_1<\infty$. Then, we denote $f_1^* \in \mathrm{L}^1(\mu)$ the unique solution. Let us define the diffusion under $\PP$:
\bea dX_t= \partial_x \ln \EE^{\PP^0}[ e^{-f^*(W^0_T)}|W^0_t=X_t] dt  + dW_t \label{SDEX} \eea  Then,
$ X_T \overset{\PP}{\sim} \mu $.
\end{prop}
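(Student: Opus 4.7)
The plan is to exploit the duality between the concave functional in \eqref{optP1} and a relative-entropy minimization, then read off the drift via Girsanov.

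First I would argue that $P_1<\infty$ together with strict concavity (strict because $f\mapsto \ln\EE^{\PP^0}[e^{-f(W^0_T)}]$ is strictly convex modulo constants, and the linear term $\EE^\mu[f]$ breaks the constant-shift invariance provided one fixes a normalization, or one simply works modulo constants and uses the coupling to $\mu$) forces a unique maximizer $f^*\in L^1(\mu)$. Next I would write the first-order condition: for every bounded test function $h$, differentiating $P_1$ at $f^*$ in direction $h$ yields
\[
\EE^{\mu}[h]\;=\;\frac{\EE^{\PP^0}\!\bigl[h(W^0_T)\,e^{-f^*(W^0_T)}\,\big|\,W^0_0=X_0\bigr]}{\EE^{\PP^0}\!\bigl[e^{-f^*(W^0_T)}\,\big|\,W^0_0=X_0\bigr]}.
\]
In other words, if I define the probability measure $\PP$ on $\mathcal{F}_T$ by the Radon--Nikodym density
\[
\frac{d\PP}{d\PP^0}\;=\;\frac{e^{-f^*(W^0_T)}}{Z},\qquad Z:=\EE^{\PP^0}[e^{-f^*(W^0_T)}|W^0_0=X_0],
\]
then the $\PP$-law of $W^0_T$ is exactly $\mu$, which is the target conclusion $X_T\overset{\PP}{\sim}\mu$ at the level of distributions.

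The remaining task is to identify the dynamics of $X_t$ under $\PP$ as the SDE \eqref{SDEX}. For this I introduce the positive $\PP^0$-martingale
\[
M_t\;:=\;\EE^{\PP^0}\!\bigl[e^{-f^*(W^0_T)}\,\big|\,\mathcal{F}_t\bigr]\;=\;g(t,W^0_t),
\]
where the second equality comes from the Markov property of Brownian motion, with $g(t,x):=\EE^{\PP^0}[e^{-f^*(W^0_T)}|W^0_t=x]$ satisfying the backward heat equation on $(0,T)\times\RR$. Applying It\^o's formula and using that $M$ has no drift gives $dM_t=\partial_x g(t,W^0_t)\,dW^0_t$, hence the stochastic logarithm is
\[
\frac{dM_t}{M_t}\;=\;\partial_x\ln g(t,W^0_t)\,dW^0_t.
\]
Girsanov's theorem then tells me that $W_t:=W^0_t-\int_0^t\partial_x\ln g(s,W^0_s)\,ds$ is a $\PP$-Brownian motion, so under $\PP$
\[
dX_t\;=\;dW^0_t\;=\;\partial_x\ln\EE^{\PP^0}\!\bigl[e^{-f^*(W^0_T)}\,\big|\,W^0_t=X_t\bigr]\,dt\;+\;dW_t,
\]
which is exactly \eqref{SDEX}; combined with the marginal identity above this finishes the proof.

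The main obstacle is the soft-analysis bookkeeping rather than the structural argument: (i) justifying differentiation under the expectation to obtain the first-order condition for $f^*\in L^1(\mu)$ without extra integrability hypotheses, (ii) ensuring enough regularity of $g(t,x)$ so that $\partial_x\ln g$ is well-defined and locally bounded on $[0,T)\times\RR$ (standard for the heat semigroup applied to $e^{-f^*}$ when, say, $f^*$ is bounded below, but delicate in general), and (iii) verifying a Novikov-type condition so that Girsanov applies on $[0,T]$. In a rigorous write-up these three points would require standing assumptions on $\mu$ and on $f^*$; granted those, the argument is exactly as sketched.
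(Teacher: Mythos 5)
Your proof is correct, but it takes a genuinely different route from the paper's. The paper derives Proposition~\ref{thm1} by first posing the primal entropy minimization $\inf_{\PP\in\mathcal{M}(\mu)}H(\PP|\PP^0)$, rewriting $H(\PP|\PP^0)=\frac12\EE^\PP\!\int_0^T\lambda_s^2\,ds$ via Girsanov, dualizing the marginal constraint into a $\sup_f\inf_\lambda$ stochastic control problem, passing to the Hamilton--Jacobi--Bellman equation, reducing it to Burgers' equation, and then linearizing by Cole--Hopf $u=-\ln U$ to identify $U(t,x)=\EE^{\PP^0}[e^{-f(W_T^0)}|W_t^0=x]$ and the optimal control $\lambda_t^*=\partial_x\ln U(t,X_t)$. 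In effect the paper \emph{derives} the stated variational formula \eqref{optP1} as the dual of the entropy problem. You instead take \eqref{optP1} as given and \emph{verify} the conclusion directly: the first-order optimality condition for $f^*$ says precisely that $\mu$ is the $W_T^0$-marginal of the exponentially tilted measure $d\PP/d\PP^0=e^{-f^*(W^0_T)}/Z$, and then applying Girsanov to the $\PP^0$-martingale $M_t=\EE^{\PP^0}[e^{-f^*(W^0_T)}|\mathcal{F}_t]=g(t,W^0_t)$ identifies the drift as $\partial_x\ln g$, recovering \eqref{SDEX}. Your route avoids the HJB/Burgers/Cole--Hopf machinery entirely and is shorter for the stated proposition; the paper's route, by contrast, exhibits the proposition as the dual of the entropy minimization, which motivates the formula from scratch and sets up the stochastic-control template reused verbatim in Theorem~\ref{thm2} for the martingale case. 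Both arguments share the same soft-analysis caveats you flagged (differentiation under the expectation, regularity of $g$, Novikov for Girsanov), and neither is more rigorous than the other on that front.
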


\no Let us emphasize that the drift $ \partial_x \ln \EE^{\PP^0}[ e^{-f(W^0_T)}|W^0_t=X_t] $ is computed under the Wiener measure $\PP^0$.  This theorem originates from the construction of  the Schr\"{o}dinger bridge between the two marginals $\delta_{X_0}$ and $\mu$, first considered by E. Schr\"{o}dinger \cite{sch} (see the survey \cite{leo} for details and extensive references). For completeness, we report the proof in the appendix which relies on an entropy penalization, as briefly sketched below.

\subsubsection*{Entropy penalization}
We are considering a SDE of the form under $\PP$:
\beaa dX_t= \lambda_t dt  + dW_t  \eeaa   for some adapted process $\lambda_t$ properly chosen such that $X_T {\sim} \mu$. By the Girsanov theorem, the measure $\PP$ is equivalent to $\PP^0$. The calibrated measure $\PP^*$ such that $X_T \overset{\PP^*}{\sim} \mu$ can then be obtained by solving a strictly convex-constrained stochastic control problem:
\beaa P_1:=\inf_{\PP \in {\cal M}(\mu)}H(\PP|\PP^0) \eeaa where $H(\PP|\PP^0):=\EE^{\PP}[  \ln {d\PP \over d\PP^0}]$ is the relative entropy with respect to the prior $\PP^0$ (chosen here to be the Wiener measure) and ${\cal M}(\mu):=\{ \PP \sim \PP^0 \;:\; X_T \overset{\PP}{\sim} \mu\}$. The relative entropy can be replaced by arbitrary strictly convex functional - see Section \ref{fdivergence}. From the Csiszar's projection theorem (see e.g. \cite{nag}), one obtains that the infimum is attained by a unique measure $\PP^* \in {\cal M}(\mu)$.

\subsubsection*{Simplification and conditioned SDE}

\no By differentiating the (strictly concave) functional  $-\EE^{\mu}[f] -\ln \EE[ e^{-f(W^0_T)}|W^0_0=X_0]$ in (\ref{optP1}) with respect to the potential $f$, we get that the optimal potential $f^*$ is explicitly given by
\bea e^{-f^*(x)}dx =Z \mu(dx) e^{(x-X_0)^2 \over 2 T}  \label{optsol} \eea where $Z$ is an irrelevant constant, ensuring that $\mu$ has unit mass. By plugging our optimal solution (\ref{optsol}) into (\ref{SDEX}), Proposition  \ref{thm1} can then be simplified and we get
\begin{cor}[Conditioned SDE with one marginal \cite{bau}] Let us assume that $f^*$ as defined by (\ref{optsol}) is $\mu$-integrable.
Let us define under $\PP$:
\bea dX_t={ \int_\RR  {(y-X_t) \over T-t} e^{(y-X_0)^2 \over 2 T} e^{-{(y-X_t)^2 \over 2(T-t)}}   \mu(dy)
\over \int_\RR  e^{(y-X_0)^2 \over 2 T} e^{-{(y-X_t)^2 \over 2(T-t)}}   \mu(dy) }     + dW_t \label{SDEXnew} \eea  Then $ X_T \overset{\PP}{\sim} \mu $. \label{corbaudoin}
\end{cor}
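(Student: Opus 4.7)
The plan is to derive the drift in SDE (\ref{SDEXnew}) as a direct specialization of Proposition \ref{thm1} by evaluating the conditional expectation $\EE^{\PP^0}[e^{-f^*(W^0_T)} \mid W^0_t = X_t]$ explicitly with $f^*$ given by the optimality formula (\ref{optsol}). Under the Wiener prior $\PP^0$, conditional on $W^0_t = X_t$, the random variable $W^0_T$ is Gaussian with mean $X_t$ and variance $T-t$, so the conditional expectation is an explicit integral against the Gaussian transition density.

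First, I would substitute $e^{-f^*(y)}\,dy = Z\,\mu(dy)\, e^{(y-X_0)^2/(2T)}$ from (\ref{optsol}) into
\[
u(t,x) := \EE^{\PP^0}\bigl[e^{-f^*(W^0_T)} \mid W^0_t = x\bigr] = \frac{Z}{\sqrt{2\pi(T-t)}} \int_\RR e^{(y-X_0)^2/(2T)} e^{-(y-x)^2/(2(T-t))} \mu(dy).
\]
Next I would differentiate $\ln u(t,x)$ with respect to $x$; the factor $Z/\sqrt{2\pi(T-t)}$ is independent of $x$ and drops out of the logarithmic derivative, and the derivative $\partial_x e^{-(y-x)^2/(2(T-t))} = \tfrac{y-x}{T-t} e^{-(y-x)^2/(2(T-t))}$ produces the numerator in (\ref{SDEXnew}). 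Hence
\[
\partial_x \ln \EE^{\PP^0}[e^{-f^*(W^0_T)} \mid W^0_t = x] = \frac{\int_\RR \tfrac{y-x}{T-t}\, e^{(y-X_0)^2/(2T)}\, e^{-(y-x)^2/(2(T-t))}\,\mu(dy)}{\int_\RR e^{(y-X_0)^2/(2T)}\, e^{-(y-x)^2/(2(T-t))}\,\mu(dy)},
\]
which is exactly the drift appearing in (\ref{SDEXnew}). Proposition \ref{thm1} then yields $X_T \overset{\PP}{\sim} \mu$.

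The main technical obstacle is the interchange of differentiation and integration needed to pass from $u(t,x)$ to $\partial_x u(t,x)$: one needs local integrability of the integrand and of its $x$-derivative against $\mu$. This is where the hypothesis that $f^*$ from (\ref{optsol}) is $\mu$-integrable enters: it guarantees that the denominator is finite (so $u(t,x) > 0$ and $\ln u$ is well-defined), and a standard dominated-convergence argument on compact neighborhoods of $x$, using the Gaussian decay in $y-x$, upgrades this to justify differentiation under the integral. Beyond that point the derivation is purely algebraic, and no additional appeal to the variational problem (\ref{optP1}) is needed since the optimality of $f^*$ has already been used to obtain (\ref{optsol}).
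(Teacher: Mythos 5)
Your proposal is correct and takes essentially the same route as the paper, which simply states that Corollary~\ref{corbaudoin} follows by plugging the optimality formula~(\ref{optsol}) into the drift of~(\ref{SDEX}) from Proposition~\ref{thm1}. You have just spelled out the Gaussian transition-kernel computation and the logarithmic differentiation that the paper leaves implicit.
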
 \no This coincides with Theorem 25 in \cite{bau} and is nothing else than a direct consequence of Schr\"{o}dinger's bridge construction \cite{sch}.

\begin{exa}[Brownian bridge] As a sanity check, by taking $\mu(x):=\delta(x-X_0)$ the Dirac mass at $X_0$, we reproduce the dynamics of a Brownian bridge:
\beaa dX_t={X_0-X_t \over T-t}dt+dW_t \eeaa
\end{exa}

\begin{rem}[With non-trivial diffusion/drift coefficients] The above construction can be trivially extended when we consider non-trivial diffusion/drift coefficients $\sigma(X_t)$ and  $b(X_t)$. We have:
\no Let us define the diffusion under $\PP$:
\beaa dX_t=   \sigma(X_t)^2 \partial_x \ln \EE^{\PP^0}[ e^{-f^*(X^0_T)}|X^0_t=X_t] dt  +b(X_t)dt+  \sigma(X_t) dW_t  \eeaa  where
under $\PP^0$: $ dX_t^0=b(X_t^0) dt +\sigma(X_t^0) dW_t^0 $. The function $f^*$ is the unique solution of a static concave optimization (assuming that $P_1 < \infty$):
\beaa P_1:=\sup_{f \in \mathrm{L}^1(\mu)} \{ -\EE^{\mu}[f] -\ln \EE^{\PP^0}[ e^{-f(X^0_T)}|X^0_0=X_0] \} \eeaa Then,
$ X_T \overset{\PP}{\sim} \mu $.
\end{rem}

\begin{rem}[Monte-Carlo Simulation] \label{Monte-Carlo Simulation} SDE (\ref{SDEX} or \ref{SDEXnew}) is highly delicate to simulate with an Euler scheme. As a numerical illustration, we have computed $C(K):=\EE^\PP[(X_t-K)^+]$ with $T=$ one year, $K \in [0.5,1.5]$ and $\mu$ a log-normal distribution with mean $X_0:=1$ and a volatility $0.3$.  The result has been quoted in terms of the Black-Scholes implied volatility and therefore should be equal to $0.3$.  One can observe that even with a timestep of $1/1000$  and $2^{18}$ MC paths, the result is still noisy (see Figure \ref{Euler}). A much better idea is to simulate under the measure $\PP^0$ under which $X_t$ is a $\PP^0$-Brownian motion. The Radon-Nikodym derivative $ {d\PP \over d\PP^0}|_{{\cal F}_t}$ is given by the Girsanov theorem:
\beaa {d\PP \over d\PP^0}|_{{\cal F}_t}&&=e^{\int_0^t \partial_x \ln \EE^{\PP^0}[ e^{-f(W^0_T)}|W^0_s=X_s] dX_s -{1 \over 2}\int_0^t (\partial_x \ln \EE^{\PP^0}[ e^{-f(W^0_T)}|W^0_s=X_s])^2 ds}\\
&&\overset{\mathrm{Ito}}{=}{\EE^{\PP^0}[ e^{-f(W^0_T)}|W^0_t=X_t] \over \EE^{\PP^0}[ e^{-f(W^0_T)}|W^0_0=X_0]  }\eeaa  where we have used that $u(t,x):=- \ln \EE^{\PP^0}[ e^{-f(W^0_T)}|W^0_t=x]$ is the solution of the Burgers PDE (see the proof of Proposition \ref{thm1} for an explanation of the appearance of this nonlinear PDE as an Bellman-Hamilton-Jacobi PDE):
\beaa \partial_t u +{1 \over 2}\partial_x^2 u -{1 \over 2}(\partial_x u)^2=0 \eeaa In particular,  the computation of a path-dependent functional $\Phi_t$ (measurable w.r.t. ${\cal F}_t$) can be written under $\PP^0$ as:
\bea \EE^\PP[ \Phi_t]&=& {  \EE^{\PP^0}[ e^{-f^*(W^0_T)} \Phi_t ]
 \over \EE^{\PP^0}[ e^{-f^*(W^0_T)}]  } \nonumber \\
 &=&{  \EE^{\PP^0}[ \mu(W^0_T) e^{(W^0_T-X_0)^2 \over 2 T} \Phi_t ]
 \over \EE^{\PP^0}[ \mu(W^0_T) e^{(W^0_T-X_0)^2 \over 2 T} ]  }   \label{EulerGirs} \eea where we have used Equation (\ref{optsol}). We have done the same experiment as above by simulating a Brownian motion and by computing (\ref{EulerGirs}) with $2^{12}$ MC paths.  As expected, we obtain a perfect match.

\begin{figure}[h]
\begin{center}
\includegraphics[width=7cm,height=5cm]{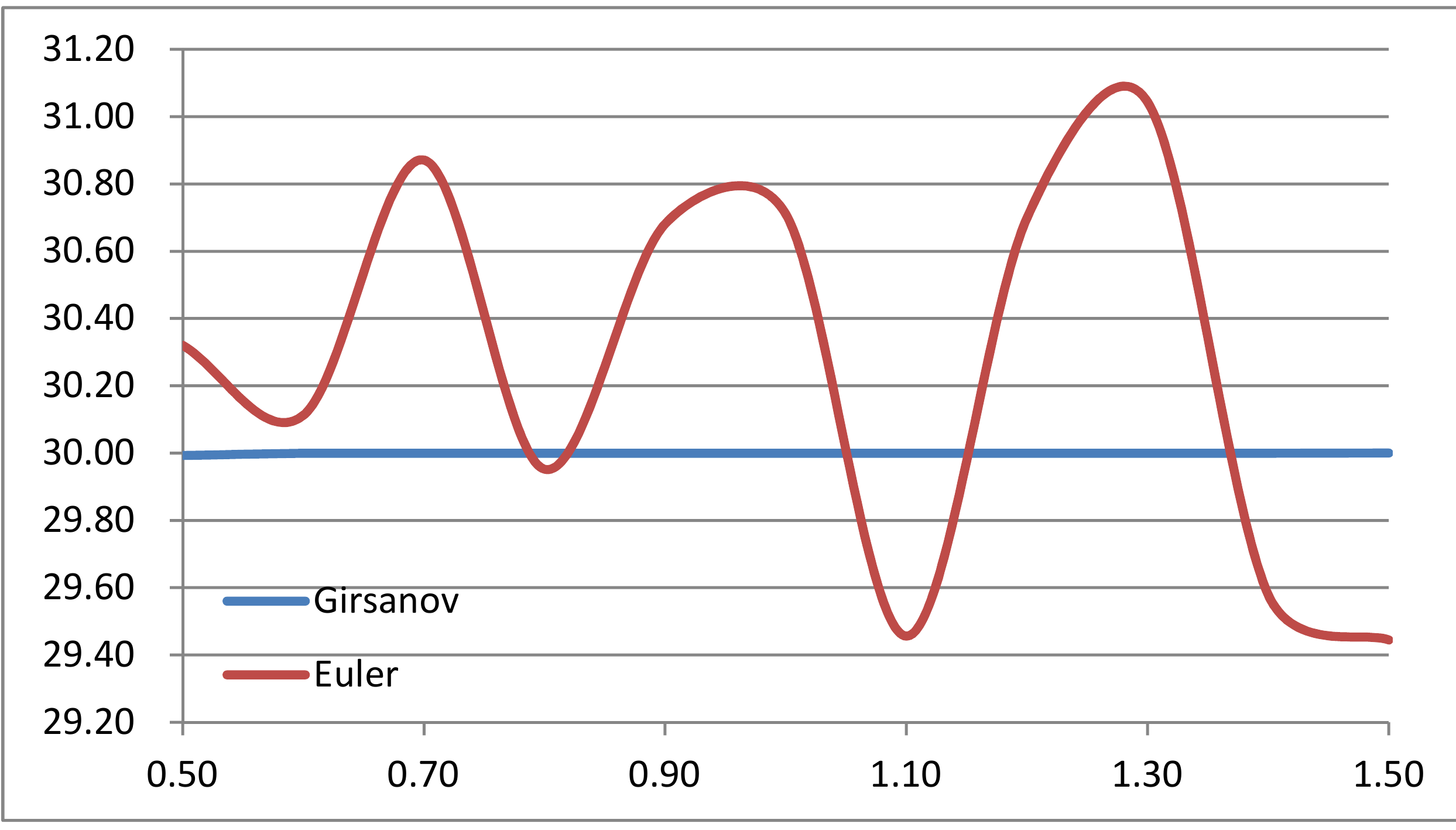}
\caption{Simulation of SDE (\ref{SDEX}  or \ref{SDEXnew}) using (1) an Euler scheme with a timestep of $1/1000$  and $2^{18}$ MC paths and (2) using the Girsanov transform (\ref{EulerGirs}) with $2^{12}$ MC paths. We have quoted $C(K):=\EE^\PP[(X_T-K)^+]$ in terms of the Black-Scholes implied volatility ($\times 100$) for different strikes $K$ with $T=$ one year and $\mu$ a log-normal distribution with mean $X_0:=1$ and volatility $0.3$. $K$ ranges in $[0.5,\cdots,1.5]$.}
\label{Euler}
\end{center}
\end{figure}
 \label{Monte-Carlo Simulation}
\end{rem}

\subsection{Density factorization and Doob's transform}
The density $p(t,x|X_0)$ of $X_t$ defined by SDE (\ref{SDEX}) can be factorized (highlighting a striking relation with Born's rule in quantum mechanics -- see \cite{nag} for an extensive discussion) as
\bea p(t,x|X_0)=\Psi(t,x) \bar{\Psi}(t,x) \label{fact} \eea where $\Psi(t,x)$ (resp. $\bar{\Psi}(t,x)$) is the solution of the backward (resp. forward) PDE:
\beaa \partial_t  \Psi(t,x)+{\cal L}\Psi(t,x)&=&0, \quad \Psi(T,x)=e^{-f^*(x)} \\
\partial_t  \bar{\Psi}(t,x)&=&{\cal L}^\dag\bar{\Psi}(t,x),  \quad \lim_{t \rightarrow 0} \Psi(t,x) \bar{\Psi}(t,x)=\delta(x-X_0) \eeaa with ${\cal L}:={1  \over 2}\partial_x^2$. \no  Indeed, one can check that $p(t,x|X_0)$ satisfies the Fokker-Planck PDE as required with
\beaa  \Psi(t,x)=\EE^{\PP^0}[ e^{-f^*(W^0_T)}|W^0_t=x]  \eeaa  This implies that $\bar{\Psi}(0,x)={ \delta(x-X_0)  \over \EE^{\PP^0}[ e^{-f^*(W^0_T)}|W^0_0=X_0]}$ and
\beaa \bar{\Psi}(t,x)={ p_0(t,x|X_0) \over \EE^{\PP^0}[ e^{-f^*(W^0_T)}|W^0_0=X_0]} \eeaa with $p_0(t,x|X_0)=e^{-(x-X_0)^2/2t}/\sqrt{2 \pi t}$. Finally, $p(t,x|X_0)$ can be written as
\beaa p(t,x|X_0)=\EE^{\PP^0}[ e^{-f^*(W^0_T)}|W^0_t=x] p_0(t,x|X_0) {1 \over \EE^{\PP^0}[ e^{-f^*(W^0_T)}|W^0_0=X_0]} \eeaa
 In probability terms, this factorization corresponds to a Doob's $\Psi$-transform applied to the prior $\PP^0$.

\subsection{Multi-marginals}
By using the Markov property of $X_t$, Proposition \ref{thm1} can be easily generalized in the case of multi-marginals. We consider again the optimization problem: \beaa P_n:=\inf_{\PP \in {\cal M}((\mu_i)_{1 \leq i \leq n})}H(\PP|\PP^0) \eeaa where
${\cal M}((\mu_i)_{1 \leq i \leq n}):=\{ \PP \sim \PP^0 \;:\; X_{t_i} \overset{\PP}{\sim} \mu_i, \quad i=1,\cdots,n\}$ with $t_1<t_2<\cdots <t_n$. For example, in the case of two marginals $\mu_1$ and $\mu_2$ (the extension to $n$ marginals is straightforward - see Corollary \ref{Conditioned SDE with two marginals: again}), we obtain
\begin{prop}[Two marginals] \label{thm1bis} Let us consider the static strictly concave optimization:
\bea P_2:=\sup_{f_1 \in \mathrm{L}^1(\mu_1), f_2 \in \mathrm{L}^1(\mu_2)} \Phi(f_1,f_2), \quad \label{P}
\\ \Phi(f_1,f_2):= -\EE^{\mu_1}[f_1]-\EE^{\mu_2}[f_2] -\ln \EE^{\PP^0}[ e^{-f_1(W^0_{t_1})-f_2(W^0_{t_2})}|W^0_0=X_0] \nonumber \eea and assume that $P_2<\infty$.
We denote  $f^*_1,f^*_2$  the unique solutions. Let us define under $\PP$:
\beaa dX_t= \partial_x \ln \EE^{\PP^0}[ e^{-f^*_1(W^0_{t_1})1_{t\leq t_1}-f^*_2(W^0_{t_2})}|W^0_t=X_t] dt  + dW_t \eeaa  Then,
$ X_{t_1} \overset{\PP}{\sim} \mu_1,\quad  X_{t_2} \overset{\PP}{\sim} \mu_2 $.
\end{prop}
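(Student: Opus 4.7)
The argument parallels the one-marginal case (Proposition~\ref{thm1}), replacing the single Doob transform by a two-step one that exploits the Markov property of $W^0$ at the intermediate time $t_1$. First, as for $P_1$, introduce Lagrange multipliers $f_1\in L^1(\mu_1)$ and $f_2\in L^1(\mu_2)$ for the two marginal constraints in the entropy minimization $\inf_{\PP\in\M((\mu_1,\mu_2))}H(\PP|\PP^0)$. The Gibbs variational formula reduces the inner infimum at fixed $(f_1,f_2)$ to $-\ln\EE^{\PP^0}[e^{-f_1(W^0_{t_1})-f_2(W^0_{t_2})}|W^0_0=X_0]$, so the dual is exactly $\Phi$. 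Strict concavity of the log-moment-generating term (as in Proposition~\ref{thm1}) together with $P_2<\infty$ furnish a unique maximizer $(f_1^*,f_2^*)$, and Gateaux differentiation in $f_i$ yields the first-order conditions
\[
\EE^{\mu_i}[\varphi]=\frac{\EE^{\PP^0}[\varphi(W^0_{t_i})\,e^{-f_1^*(W^0_{t_1})-f_2^*(W^0_{t_2})}|W^0_0=X_0]}{Z},\quad i=1,2,
\]
where $Z:=\EE^{\PP^0}[e^{-f_1^*(W^0_{t_1})-f_2^*(W^0_{t_2})}|W^0_0=X_0]$.

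Next, define $\PP\sim\PP^0$ by $d\PP/d\PP^0:=e^{-f_1^*(W^0_{t_1})-f_2^*(W^0_{t_2})}/Z$. The identities above immediately give $X_{t_i}\overset{\PP}{\sim}\mu_i$ for $i=1,2$. To identify the drift, consider the $\PP^0$-martingale density $M_t:=\EE^{\PP^0}[d\PP/d\PP^0|\FF_t]$. By the Markov property of $W^0$, on $[0,t_1]$ we have $M_t=h_1(t,W^0_t)/Z$ with $h_1(t,x):=\EE^{\PP^0}[e^{-f_1^*(W^0_{t_1})-f_2^*(W^0_{t_2})}|W^0_t=x]$, while on $[t_1,t_2]$ we have $M_t=e^{-f_1^*(W^0_{t_1})}h_2(t,W^0_t)/Z$ with $h_2(t,x):=\EE^{\PP^0}[e^{-f_2^*(W^0_{t_2})}|W^0_t=x]$. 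Each $h_i$ solves the backward heat equation from a positive terminal datum on its sub-interval, so It\^o's formula gives $dM_t/M_t=(\partial_x\ln h_i)(t,W^0_t)\,dW^0_t$; the $\FF_{t_1}$-measurable factor $e^{-f_1^*(W^0_{t_1})}$ present on $[t_1,t_2]$ is annihilated by $\partial_x$ and so contributes nothing to the stochastic logarithm. Both sub-interval drifts are packaged uniformly by the indicator $1_{t\leq t_1}$ appearing in the proposition's formula.

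Girsanov's theorem then ensures that under $\PP$ the process $W_t:=W^0_t-\int_0^t(\partial_x\ln h_i)(s,W^0_s)\,ds$ is a Brownian motion, so $X_t=W^0_t$ satisfies the SDE displayed in the proposition. Combined with the first-order conditions this concludes the proof.

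The main subtlety is the matching at the intermediate time $t_1$: one must verify that the two martingales $h_1(\cdot,W^0_\cdot)$ on $[0,t_1]$ and $h_2(\cdot,W^0_\cdot)$ on $[t_1,t_2]$ glue into a single positive $\PP^0$-martingale $M_t$ whose stochastic logarithm is almost surely given by the unified expression in the proposition; a secondary technical point is checking the integrability needed to apply Girsanov (true martingality of $M$ and $\mu_i$-integrability of $f_i^*$), both of which follow from $P_2<\infty$ as in the one-marginal appendix. Everything else iterates the one-marginal Schr\"odinger construction verbatim, which is precisely why the same reasoning extends transparently to the $n$-marginal corollary that follows.
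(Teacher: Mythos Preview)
Your argument is correct and reaches the same conclusion, but it follows a genuinely different path from the paper's proof. The paper (in the appendix) sets up the entropy minimization as a stochastic control problem with drift $\lambda_t$, computes $H(\PP|\PP^0)=\tfrac12\EE^\PP\big[\int_0^{t_2}\lambda_s^2\,ds\big]$ via Girsanov, dualizes with Lagrange multipliers $f_1,f_2$, and then---exactly as in the one-marginal case---identifies the inner infimum as the value function $u$ of an HJB equation which reduces to Burgers and is linearized by the Cole--Hopf transform $u=-\ln U$. The optimal feedback $\lambda^*_t=-\partial_x u(t,X_t)$ then gives the drift in the displayed SDE.

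You instead bypass the HJB/Burgers/Cole--Hopf chain entirely by invoking the Gibbs (Donsker--Varadhan) variational formula to collapse the inner infimum directly to $-\ln\EE^{\PP^0}[e^{-f_1-f_2}]$ and to identify the optimal $\PP$ as the exponential tilt. The drift is then read off as the stochastic logarithm of the density martingale $M_t$, piecewise over $[0,t_1]$ and $[t_1,t_2]$ using the Markov property. This is a more direct probabilistic argument: it avoids solving any nonlinear PDE and makes the Doob $h$-transform structure explicit, at the cost of treating the gluing at $t_1$ as a separate verification (which you correctly flag). The paper's route, by contrast, packages both intervals uniformly through the HJB value function, which is more amenable to the later martingale-constrained generalizations (Theorems~\ref{thm2}--\ref{thm2bis}) where no clean Gibbs formula is available and the Burgers-like PDE genuinely must be solved.
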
 \no $f_1$ and $f_2$ are called the Schr\"{o}dinger potentials and can be related to the Monge-Kantorovich potentials by considering the entropic relaxation of an optimal transportation problem.  A similar factorization (\ref{fact}) holds with
\beaa  \Psi(t,x)=\EE^{\PP^0}[ e^{-f^*_1(W^0_{t_1})1_{t\leq t_1}-f^*_2(W^0_{t_2})}|W^0_t=X_t] \eeaa

\begin{rem}[Sinkhorn's algorithm] \no By differentiating the functional  $\Phi(f_1,f_2)$ with respect to the potentials $f_1$ and $f_2$, we get that the optimal potentials $f_1^*,f_2^*$ are given by
\bea Z \mu_1(x)&=&e^{-f^*_1(x)} e^{-{(x-X_0)^2 \over 2t_1}}\int_\RR e^{-{(y-x)^2 \over 2(t_2-t_1)}} e^{-f^*_2(y)} dy \label{sink1}\\
Z \mu_2(x)&=&e^{-f^*_2(x)}\int_\RR e^{-{(y-x)^2 \over 2(t_2-t_1)}} e^{-{(y-X_0)^2 \over 2t_1}} e^{-f^*_1(y)} dy \label{sink2}
\eea with $Z$ an irrelevant constant.  The static optimization problem (\ref{P}) can be solved using the Sinkhorn algorithm which consists in doing sequentially the two  iterations (\ref{sink1},\ref{sink2}), leading to a  convergence with a linear convergence  rate. Note in particular that $f_1^*(x)$ (resp. $f_2^*(x)$) is explicitly fixed if $f_2^*(x)$ (resp. $f_1^*(x)$) is given (see Equations (\ref{sink1},\ref{sink2})).
\end{rem}
\no Using this explicit expression of $e^{-f^*_1(x)}$ as a function of $e^{-f^*_2(\cdot)}$ (Equation (\ref{sink1})), Proposition  \ref{thm1bis} can then be simplified and we get

\begin{cor}[Conditioned SDE with two marginals: again]\label{Conditioned SDE with two marginals: again} Let us consider the static strictly concave optimization
\beaa P_2:=\sup_{ f_2 \in \mathrm{L}^1(\mu_2)}
-\EE^{\mu_2}[f_2] -\int \mu_1(dx) \ln \EE^{\PP^0}[ e^{-f_2(W^0_{t_2})}|W^0_{t_1}=x] \nonumber \eeaa and assume that $P_2<\infty$.  We denote $f_2^*$ the unique solution. Let us define under $\PP$ for all $t\in(t_1,t_2]$:
\beaa dX_t= \partial_x \ln \EE^{\PP^0}[ e^{-f^*_2(W^0_{t_2})}|W^0_t=X_t] dt  + dW_t \eeaa \no Then,
$  X_{t_2} \overset{\PP}{\sim} \mu_2 $. Similarly, $f_2^*$ is the unique solution of the nonlinear equation:
\beaa \mu_2(y)= e^{-f^*_2(y)} \int_\RR { e^{-{(y-x)^2 \over 2(t_2-t_1)}} \mu_1(dx) \over
\int_\RR e^{-{(z-x)^2 \over 2(t_2-t_1)}} e^{-f^*_2(z)} dz } \eeaa
\end{cor}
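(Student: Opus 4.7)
My plan is to deduce the corollary from Proposition \ref{thm1bis} by partially optimizing $\Phi(f_1,f_2)$ in $f_1$ with $f_2$ held fixed. Writing $E(x;f_2):=\EE^{\PP^0}[e^{-f_2(W^0_{t_2})}|W^0_{t_1}=x]$ and $p_0(t_1,x|X_0)$ for the standard Gaussian kernel, the first-order condition $\partial_{f_1}\Phi=0$ is precisely the Sinkhorn relation (\ref{sink1}), and it produces a unique partial maximizer $\hat f_1(\cdot;f_2)$ of the form $e^{-\hat f_1(x;f_2)}\propto \mu_1(x)/[p_0(t_1,x|X_0)\,E(x;f_2)]$.

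Next, I would plug $\hat f_1(\cdot;f_2)$ back into $\Phi$. By the first-order condition, the log-partition term collapses to an $f_2$-independent constant, while $-\EE^{\mu_1}[\hat f_1]$, after inserting the explicit formula for $\hat f_1$, contributes $-\int \mu_1(dx)\,\ln E(x;f_2)$ modulo further $f_2$-independent constants. Adding in $-\EE^{\mu_2}[f_2]$ and dropping all additive constants yields exactly the reduced objective of the corollary. Strict concavity of this reduced functional, and hence uniqueness of $f_2^*$, is inherited from the joint strict concavity of $\Phi$, since the envelope of a jointly strictly concave function over a variable with unique inner maximizer is strictly concave in the remaining variable.

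For the SDE, I would simply observe that on $(t_1,t_2]$ the indicator $1_{t\leq t_1}$ vanishes, so the drift in Proposition \ref{thm1bis} reduces on that interval to $\partial_x\ln\EE^{\PP^0}[e^{-f_2^*(W^0_{t_2})}|W^0_t=X_t]$, matching the corollary; the conclusion $X_{t_2}\sim\mu_2$ then follows directly from Proposition \ref{thm1bis}. For the nonlinear characterization of $f_2^*$, I would substitute the expression for $e^{-f_1^*(x)}$ coming from (\ref{sink1}) into (\ref{sink2}); the Gaussian factors $e^{\pm(x-X_0)^2/(2t_1)}$ cancel and, after merging the overall normalization constants, the displayed fixed-point equation for $\mu_2(y)$ drops out. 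The only real obstacle is careful bookkeeping of the normalization constants $Z$ in the partial optimization and in (\ref{sink1})--(\ref{sink2}) so that these cancellations come out cleanly; no analytic input beyond Proposition \ref{thm1bis} is needed.
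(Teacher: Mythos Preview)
Your proposal is correct and follows essentially the same route as the paper: the paper's one-line argument ``Using this explicit expression of $e^{-f^*_1(x)}$ as a function of $e^{-f^*_2(\cdot)}$ (Equation (\ref{sink1})), Proposition \ref{thm1bis} can then be simplified'' is exactly the partial optimization in $f_1$ via the Sinkhorn relation that you spell out. Your treatment is simply a more detailed version of the paper's terse derivation, with the same use of (\ref{sink1}) to reduce the objective and of (\ref{sink1})--(\ref{sink2}) to obtain the fixed-point equation.
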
 \no By iteratively applying this construction over the intervals $[t_{i-1},t_{i}]$, we obtain a bridge $\PP \in {\cal M}((\mu_i)_{1 \leq i \leq n})$.

\begin{rem}[Limit $\Delta t:=t_2-t_1 \rightarrow 0$] We assume that $\mu_2(x)=\mu_1(x)+\Delta t \partial_x\{ \mu_1(x) \psi_{12}(x)\}+o(\Delta t)$. Then in the first-order in $\Delta t$, we have
\beaa f_1(x) &=& -  \psi_{12}(x)- {(x-X_0)\over t_1} -{1 \over 2}\partial_x \ln \mu_1(x) \\
f_2(x)&=&A(x)+  \frac{1}{2} \left(A'(x)^2-A''(x)\right)\Delta t \\
A(x)&=&-f_1(x)-\ln \left(Z \mu_1(x)\right)-{(x-{X_0})^2 \over 2 t_1} \eeaa
\end{rem}

\subsection{Decoupling and relative entropy} \label{Decoupling}
A close inspection of Propositions  (\ref{thm1}) and (\ref{thm1bis}) reveals that the computing of the drift for the calibrated diffusion is obtained by solving first a static concave optimization -- similar to the entropic  construction of Vanilla smiles in  \cite{demarch} and then the computation of a conditional expectation (under the measure $\PP^0$). This decoupling can be directly justified by observing that the  relative entropy $H(\PP|\PP^0)$ can be disintegrating (i.e., by taking the conditional expectations) with respect to $X_{t_1}:=x_1$ and $X_{t_2}:=x_2$ and we have:
\bea H(\PP|\PP_0)=\int_{\RR^2} H(\PP^{x_1,x_2}|\PP_0^{x_1,x_2}) p(x_1,x_2) dx_1 dx_2+\int_{\RR^2} p(x_1,x_2)\ln {p(x_1,x_2) \over p_0(x_1,x_2)} dx_1dx_2  \label{decouplingentropy} \eea We deduce that the optimal value of $P_2$ is attained by
\beaa  \PP^{x_1,x_2}=\PP_0^{x_1,x_2} \eeaa and $p(x_1,x_2)$ is the (dual) solution of the above static  concave optimization (\ref{P}). Note that we could have consider $f$-divergence  instead of $H(\PP|\PP^0)$, see Section \ref{fdivergence}. However, the decoupling property is no more satisfied, highlighting the choice of the relative entropy as a convenient strictly convex function.

\subsection{Infinitely-many marginals}

Let us define $t$-marginals $\mu_t$ for all $t>0$ and set $F_\mu(t,K):=\EE^{\mu_t}[1_{X_t \leq K}]$.

\begin{prop}[Infinitely-many marginals] \label{thm1bisbis}
Let us define under $\PP$:
\beaa dX_t={ {1 \over 2}\partial_x^2 F_\mu(t,X_t)-\partial_t  F_\mu(t,X_t) \over \partial_x F_\mu(t,X_t)}dt + dW_t \eeaa and assume that the SDE is well-posed. Then $X_t \overset{\PP}{\sim} \mu_t$ for all $t > 0$.
\end{prop}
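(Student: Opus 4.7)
The plan is to proceed by verifying that the prescribed density solves the Fokker-Planck equation associated to the SDE, so that by uniqueness (which follows from the well-posedness assumption) the law of $X_t$ must be $\mu_t$.

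First I would set up notation: let $p(t,x)$ denote the density of $X_t$ under $\PP$. Since the SDE $dX_t = \lambda(t,X_t)\,dt + dW_t$ with $\lambda(t,x) := \bigl(\tfrac{1}{2}\partial_x^2 F_\mu(t,x) - \partial_t F_\mu(t,x)\bigr)/\partial_x F_\mu(t,x)$ is assumed well-posed, $p$ is the unique solution (in the appropriate class) of the Fokker-Planck PDE
\begin{equation*}
\partial_t p(t,x) = -\partial_x\bigl(\lambda(t,x) p(t,x)\bigr) + \tfrac{1}{2}\partial_x^2 p(t,x),
\end{equation*}
with initial condition $p(0,\cdot) = \mu_0$ (consistent with $X_0 \sim \mu_0$, which must be assumed implicitly for the conclusion to hold at all $t>0$ including as $t\downarrow 0$).

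Next I would plug in the ansatz $p(t,x) = \mu_t(x) = \partial_x F_\mu(t,x)$ and check that it solves this PDE. Multiplying the drift by this candidate density yields
\begin{equation*}
\lambda(t,x) p(t,x) = \tfrac{1}{2}\partial_x^2 F_\mu(t,x) - \partial_t F_\mu(t,x),
\end{equation*}
whose $x$-derivative is $\tfrac{1}{2}\partial_x^3 F_\mu(t,x) - \partial_t \partial_x F_\mu(t,x)$. On the other hand, $\tfrac{1}{2}\partial_x^2 p(t,x) = \tfrac{1}{2}\partial_x^3 F_\mu(t,x)$ and $\partial_t p(t,x) = \partial_t \partial_x F_\mu(t,x)$. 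Substituting,
\begin{equation*}
-\partial_x(\lambda p) + \tfrac{1}{2}\partial_x^2 p = -\tfrac{1}{2}\partial_x^3 F_\mu + \partial_t\partial_x F_\mu + \tfrac{1}{2}\partial_x^3 F_\mu = \partial_t \partial_x F_\mu = \partial_t p,
\end{equation*}
so the PDE is satisfied identically; this is essentially a drift-version of Dupire's identity.

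Finally I would conclude: since $\mu_t$ solves the same Fokker-Planck PDE as $p$ (with matching initial condition), the well-posedness hypothesis — which can reasonably be strengthened to uniqueness for the associated forward PDE in the class of probability densities — forces $p(t,\cdot) = \mu_t$ for all $t>0$, giving $X_t \overset{\PP}{\sim} \mu_t$. The main delicate point is not the algebraic verification, which is immediate, but the regularity and uniqueness framework: one needs enough smoothness of $F_\mu$ (e.g.\ $\partial_x F_\mu > 0$, $\partial_t F_\mu$ and $\partial_x^2 F_\mu$ existing and sufficiently regular) for $\lambda$ to be well-defined and the Fokker-Planck uniqueness theorem to apply. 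This is exactly what is hidden behind the hypothesis that the SDE be well-posed, so under that assumption the argument is complete.
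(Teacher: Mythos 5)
Your proof is correct and is essentially the same argument as the paper's, merely phrased one derivative lower: the paper applies It\^o--Tanaka to $(X_t-K)^+$ to derive a linear PDE for the cumulative distribution function $F(t,K)=\PP(X_t<K)$, namely $\partial_t F = -\lambda\,\partial_K F + \tfrac12\partial_K^2 F$, checks that $F_\mu$ solves it, and concludes by uniqueness; differentiating that PDE once in $K$ gives exactly the Fokker--Planck equation you verify for the density $p=\partial_x F_\mu$. Your remarks about the implicit initial condition $X_0\sim\mu_0$ and about the regularity/uniqueness hypotheses folded into the phrase ``the SDE is well-posed'' are apt and, if anything, slightly more careful than the paper's exposition.
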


\subsection{$f$-divergence} \label{fdivergence}
\no The Schr\"{o}dinger construction can be generalized by replacing the entropy distance by the $f$-divergence:
\beaa K_f(\PP|\PP^0):=\EE^{\PP^0}[ f\left({d\PP \over d\PP^0}\right)] \eeaa where  $f$ is a strictly convex function with $f(1):=0$. The relative entropy corresponds to take $f(m):=m \ln m$. We obtain:
\begin{prop}[$f$-divergence, one marginal] Let us consider the static strictly concave optimization:
\bea P_1:=\sup_{f_1 \in \mathrm{L}^1(\mu)} \{ -\EE^{\mu}[f_1] +u(0,X_0,1) \}\eea and assuming that $P_1<\infty$.  We denote  $f_1^*$  the unique solution. Let us define the diffusion under $\PP$:
\beaa dX_t&=&\lambda_t^* dt  + dW_t  \\
dM_t&=&\lambda_t^* M_t dW_t, \quad M_{t=0}=1  \eeaa  where
\beaa \lambda_t^*:=-{\partial_{mx} u(t,X_t,M_t) \over M_t \partial_{m}^2 u(t,X_t,M_t)} \eeaa and
$u(t,x,m)$ is the unique solution of
\bea \partial_t u+{1\over 2}\partial_x^2 u -{1 \over 2} {(\partial_{xm} u)^2 \over \partial_{mm} u}=0, \quad u(T,x,m)=f(m)+f_1^*(x)m \label{PDEdiv} \eea  Then,
$ X_T \overset{\PP}{\sim} \mu $.
\end{prop}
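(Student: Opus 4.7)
The plan is to mirror the relative-entropy proof of Proposition \ref{thm1}, but to keep the density process $M_t := d\PP/d\PP^0|_{\mathcal{F}_t}$ as an explicit state variable, since for a general $f$-divergence there is no closed-form Radon-Nikodym optimiser. First I would rewrite the primal
\[
P_1 = \inf_{\PP \in \M(\mu)} K_f(\PP|\PP^0) = \inf_{\PP \sim \PP^0}\sup_{f_1 \in \mathrm{L}^1(\mu)}\bigl\{K_f(\PP|\PP^0)+\EE^\mu[f_1]-\EE^{\PP}[f_1(X_T)]\bigr\},
\]
invoke standard strong duality (using strict convexity and superlinearity of $f$), swap $\inf$ and $\sup$, and relabel $f_1\mapsto -f_1$. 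Parametrising $\PP \sim \PP^0$ through the Girsanov SDE $dM_t = \lambda_t M_t\,dW^0_t$ (so that under $\PP^0$ one has $dX_t = dW^0_t$), the inner infimum becomes the stochastic control problem
\[
u(0,X_0,1) := \inf_{\lambda}\,\EE^{\PP^0}\!\left[f(M_T)+M_T f_1(X_T)\right],
\]
which yields the dual $\sup_{f_1}\{-\EE^\mu[f_1]+u(0,X_0,1)\}$ in the statement.

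Next I would apply dynamic programming to obtain the HJB equation for $u(t,x,m)$. The generator of $(X_t,M_t)$ under $\PP^0$ with control $\lambda$ is $\frac{1}{2}\partial_{xx}+\frac{1}{2}\lambda^2 m^2\partial_{mm}+\lambda m\,\partial_{xm}$, so Bellman reads
\[
\partial_t u+\tfrac{1}{2}\partial_{xx}u+\inf_{\lambda\in\RR}\bigl\{\tfrac{1}{2}\lambda^2 m^2 \partial_{mm}u+\lambda m\,\partial_{xm}u\bigr\}=0.
\]
Strict convexity of $f$ propagates through DP to give $\partial_{mm}u>0$, so the first-order condition in $\lambda$ produces the feedback $\lambda^*_t = -\partial_{xm}u/(m\,\partial_{mm}u)$, and substituting back yields precisely (\ref{PDEdiv}) with terminal condition $u(T,x,m)=f(m)+f_1^*(x)m$ read off from the payoff.

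To conclude, I would verify the marginal constraint $X_T\overset{\PP}{\sim}\mu$ via the envelope theorem. The functional $F(f_1):=-\EE^\mu[f_1]+u(0,X_0,1)$ is strictly concave in $f_1$ (inherited from strict convexity of $f$), hence admits a unique maximiser $f_1^*$. Differentiating in an admissible direction $h$ and using optimality of $\lambda^*$,
\[
0=\frac{d}{d\eps}F(f_1^*+\eps h)\Big|_{\eps=0}=-\EE^\mu[h]+\EE^{\PP^0}[M_T^* h(X_T)]=-\EE^\mu[h]+\EE^{\PP^*}[h(X_T)],
\]
so $X_T$ has law $\mu$ under $\PP^*$. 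The main obstacle I expect is rigorously justifying strong duality together with producing a smooth-enough solution of the fully nonlinear PDE so that $\lambda^*$ is a legitimate feedback control: for $f(m)=m\ln m$ the ansatz $u(t,x,m)=m\phi(t,x)+m\ln m$ linearises the PDE to the Burgers equation of Proposition \ref{thm1}, but for general strictly convex $f$ one needs superlinear growth/coercivity on $f$ plus enough integrability of $\mu$ for the Legendre transform $(f')^{-1}$ to land in the admissible range of densities.
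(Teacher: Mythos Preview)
Your proposal is correct and follows precisely the approach the paper uses for Proposition~\ref{thm1} (and which it leaves implicit here, the $f$-divergence proposition being stated without a separate proof): dualize the constrained $f$-divergence minimization, recast the inner infimum as a stochastic control problem in the enlarged state $(X_t,M_t)$ under $\PP^0$, derive the HJB equation, and read off the feedback $\lambda^*$. Your envelope-theorem verification of $X_T\sim\mu$ is exactly the first-order condition the paper invokes elsewhere when it differentiates the dual functional in $f_1$.
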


\begin{exa}[$\chi^2$-divergence] Let us consider the $\chi^2$-divergence for which $f(m)=m^2-1$. The solution of PDE (\ref{PDEdiv}) is then
\beaa u(t,x,m)&=&m^2+A(t,x)m+C(t,x) \\
A(t,x)&=&\EE^{\PP^0}[f_1^*(W_T^0)|W_t^0=x] \\
\partial_t C +{1 \over 2}\partial_{xx} C-{(\partial_x A)^2 \over 4}&=&0, \quad C(T,x)=-1\eeaa with $\PP^0$ the Wiener measure. The drift is
$ \lambda_t^*:=-{\partial_x A(t,X_t) \over 2 M_t } $.
\end{exa}

\subsection{Brownian with a prescribed marginal and a fixed area}
The above construction can be extended if we impose to the process $X_t$ to have a fixed area $\EE[ \int_0^T X_s ds]:={\cal A}$ and a prescribed $T$-marginal $\mu$. In mathematical finance, this can be interpreted as giving the prices of Vanillas and a forward on an Asian option.

\no Let us define the diffusion under $\PP$:
\beaa dX_t&=& \partial_x \ln \EE^{\PP^0}[ e^{-f^*(W^0_T)-A^* \int_t^T W^0_s ds}|W^0_t=X_t] dt  + dW_t \eeaa  where
the function $f^*$ and the number $A^* \in \RR$ are the unique solutions of a static concave optimization:
\bea P_1:=\sup_{f \in \mathrm{L}^1(\mu), A \in \RR} \{ -\EE^{\mu}[f]-A \; {\cal A} -\ln \EE^{\PP^0}[ e^{-f(W^0_T)-A \int_0^T W^0_s ds}|W^0_0=X_0] \}\label{optP1} \eea Then, $X_T \overset{\PP}{\sim} \mu$ and $\EE^\PP[\int_0^T X_s ds]={\cal A}$. The proof is similar to the proof of Proposition \ref{thm1} and is therefore not reproduced. The infimum over $f$ and $A$ can be computed and we get
\beaa \mu(x)={e^{-f^*(x)} \PP^0(W_T=x) \over Z}\EE^{\PP^0}[ e^{-A^* \int_0^T W_s ds}   |W_T=x] \eeaa and
\beaa {\cal A}=\int \mu(dx) {\EE^{\PP^0}[ e^{-A^* \int_0^T W_s ds} \int_0^T W_s ds |W_T=x] \over \EE^{\PP^0}[ e^{-A^* \int_0^T W_s ds}  |W_T=x] } \eeaa
Using that the   $(W_T,\int_0^T W_s ds)$  is a two-dimensional Gaussian vector with covariance $\left(
\begin{array}{cc}
 T & \frac{T^2}{2} \\
 \frac{T^2}{2} & \frac{T^3}{3} \\
\end{array}
\right)$ and mean $(X_0,X_0T)$, we get:
\bea A^*&=&{6 \over T^2}\left( X_0+\int y \mu(dy) -{2 {\cal A} \over T} \right) \label{defnA} \\
 \mu(x)&=&{e^{-f^*(x)}   \over Z} \frac{e^{\frac{(A^*)^2 T^3}{24}-\frac{1}{2} A^* T
   (x+X_0)-\frac{(x-X_0)^2}{2 T}}}{\sqrt{2 \pi  T}
   } \label{eqmu}  \eea Finally, this implies
   \begin{prop} Let us assume that $f^*$ as defined (\ref{eqmu})  is $\mu$-integrable and $\int |y| \mu(dy)<\infty$. Let us define under $\PP \sim \PP^0$:
    \beaa dX_t&=& \partial_{X_t} \ln
   \int \mu(dx)  e^{\frac{1}{24} \left((A^*)^2 (T-t)^3-(A^*)^2 T^3-12 A^* (T-t)
   (x+X_t)+12 A^* T (x+X_0)-\frac{12
   (x-X_t)^2}{T-t}+\frac{12 (x-X_0)^2}{T}\right)}  dt \\
   &+& dW_t \eeaa with $A^*$ defined by (\ref{defnA}). Then,
   $X_T \overset{\PP}{\sim} \mu$ and $\EE^\PP[\int_0^T X_s ds]={\cal A}$.
   \end{prop}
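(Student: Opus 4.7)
The plan is to adapt the proof of Proposition \ref{thm1} by minimizing $H(\PP|\PP^0)$ over
${\cal M}(\mu,{\cal A}) := \{ \PP \sim \PP^0 : X_T \overset{\PP}{\sim} \mu,\; \EE^{\PP}[\int_0^T X_s\,ds] = {\cal A}\}$. The scalar constraint on the area contributes a single additional Lagrange multiplier $A^* \in \RR$ alongside the function multiplier $f^* \in \mathrm{L}^1(\mu)$ for the marginal, so the same Csiszar--Fenchel argument as in the appendix produces the dual (\ref{optP1}) and identifies the optimal measure as a Doob $\Psi$-transform of $\PP^0$ with
$$\Psi(t,x) := \EE^{\PP^0}\bigl[\exp\bigl(-f^*(W^0_T) - A^* {\textstyle\int_t^T} W^0_s\,ds\bigr)\,\bigl|\,W^0_t = x\bigr].$$
The Girsanov drift is then $\lambda^*_t = \partial_x \ln \Psi(t, X_t)$, and the proposition amounts to computing $\Psi(t,X_t)$ explicitly.

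Differentiating (\ref{optP1}) with respect to $f$ and using that $\int_0^T W^0_s\,ds$ given $W^0_T = x$ is Gaussian with mean $T(x+X_0)/2$ and variance $T^3/12$ (read off the covariance matrix in the excerpt) produces (\ref{eqmu}) for $e^{-f^*(x)}$; differentiating with respect to $A$ reproduces the area constraint, and evaluating it against (\ref{eqmu}) using the same Gaussian conditioning gives (\ref{defnA}). To compute the drift I would then condition the integrand of $\Psi(t,X_t)$ on $W^0_T = y$: given $W^0_t = X_t$ and $W^0_T = y$, the area $\int_t^T W^0_s\,ds$ is Gaussian with mean $(T-t)(X_t+y)/2$ and variance $(T-t)^3/12$ (the integrated Brownian-bridge variance on a length-$(T-t)$ interval). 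Its Laplace transform in $A^*$ combines with the transition density $p_{T-t}(X_t,y)$ and with (\ref{eqmu}), which converts $e^{-f^*(y)}\,dy$ into $\mu(dy)$ times an explicit Gaussian factor in $y, X_0, A^*, T$; regrouping the quadratic exponents collapses the integrand into the compact exponential displayed in the proposition, and the prefactor produced by the $y$-integration has no $X_t$-dependence so drops out of $\partial_{X_t}\ln$.

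The marginal constraint $X_T \overset{\PP}{\sim}\mu$ is then immediate from the density factorization (\ref{fact}) applied at $t=T$: $\Psi(T,x) = e^{-f^*(x)}$ and (\ref{eqmu}) identifies the terminal density with $\mu$; the area constraint is precisely the first-order condition in $A^*$. Conceptually the argument is a direct extension of Proposition \ref{thm1}, and the main obstacle is purely algebraic --- one has to verify that the three Gaussian contributions (the transition density, the conditional Laplace transform of the area given its endpoints, and the factor from (\ref{eqmu})) assemble into exactly the exponent stated in the proposition. The hypotheses $f^* \in \mathrm{L}^1(\mu)$ and $\int |y|\,\mu(dy) < \infty$ are what is needed for the dual infimum to be attained and for (\ref{defnA}) to make sense; no new idea beyond Proposition \ref{thm1} is required.
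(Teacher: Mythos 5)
Your proposal is correct and follows essentially the same route as the paper: the author also sets up the entropy-minimization over measures satisfying both the $T$-marginal and the area constraint, dualizes with a functional multiplier $f$ and a scalar multiplier $A$, notes that the proof parallels Proposition~\ref{thm1}, and then reduces the stated drift to explicit Gaussian conditioning using the covariance of $(W_T,\int_0^T W_s\,ds)$. Your identification of the Brownian-bridge area law (mean $(T-t)(X_t+y)/2$, variance $(T-t)^3/12$) and the substitution of (\ref{eqmu}) to convert $e^{-f^*(y)}\,dy$ into $\mu(dy)$ times a Gaussian factor is precisely the algebraic regrouping that produces the exponent in the proposition, so there is no gap.
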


\section{A new class of SVMs matching Vanillas: Martingale Schr\"{o}dinger bridges} \label{A new class of SVMs matching Vanillas}

\subsection{Naked SVM}
\no Let us consider a naked SVM defined under a risk-neutral measure $\PP^0$ by
\beaa dS_t&=&S_t a_t dW^0_t,  \quad d\langle Z^0, W^0\rangle_t=\rho dt,\quad \rho \in (-1,1) \\
da_t&=&b(a_t)dt+ \sigma(a_t) dZ^0_t \eeaa Let us emphasize again that although for the sake of simplicity, we consider one-dimensional factor SVMs, our results extend  to  multi-dimensional SVMs (see however our discussion on the numerical implementation which is more involved from a multi-dimensional SVM). Furthermore, we could assume that $b$ and $\sigma$ depend also on $S$ although common SVMs, used by practitioners, do not assume such a dependence.

\no We denote below ${\cal L}^0$ the It\^o generator of the process $(S_t,a_t)$ and consider one-dimensional marginals $(\mu_i)_{1 \leq i \leq n}$ increasing in the convex order, meaning that for all convex functions $f$:
\beaa \EE^{\mu_i}[f] \leq  \EE^{\mu_{i+1}}[f],\quad \forall i=1,\cdots,n-1 \eeaa In practice, the marginals $(\mu_i)_{1 \leq i \leq n}$  are implied from  market values (at $t=0$) of $t_i$-Vanilla options $C(t_i,K)$: $\mu_i(K):=\partial_K^2C(t_i,K)$.  $\cal A$ denotes the space of adapted process in $\mathrm{L}^2(\PP^0)$.

\subsection{One marginal}

\begin{thm}[One marginal] \label{thm2} Let us consider the strictly concave  optimization problem:
\bea P_1:=\sup_{f_1\in \mathrm{L}^1(\mu_1), \Delta_\cdot \in {\cal A}}\{ - \EE^{\mu_1}[f_1] -\ln \EE^{\PP^0}[e^{-f_1(S_{t_1})-\int_0^{t_1} \Delta_s dS_s}]  \}
 \label{opt1} \eea and assume that $P_1< \infty$. We denote   $f^*_1$ and  $(\Delta^*_s)_{s \in (t_0,t_1)}$ the unique solutions. Let us consider the SVM defined under a measure $\PP$ by
\bea dS_t&=&S_t a_t dW_t,  \quad d\langle Z, W\rangle_t=\rho dt \label{SVMnew0} \\
da_t&=&\left(b(a_t)+(1-\rho^2)\sigma(a_t)^2 \partial_a  \ln \EE^{\PP^0}[ e^{-f^*_1(S_{t_1})-\int_t^{t_1} \Delta_s^* dS_s}|S_t,a_t] \right)dt+ \sigma(a_t) dZ_t  \nonumber  \eea for all $t \in [t_0:=0, t_1]$.

 \no {\bf (1)}  Then,
$ S_{t_1} \overset{\PP}{\sim} \mu_1 $.

\no {\bf (2)} $\Delta_t^*$ is given by \bea
\Delta_t^*=-\left( \partial_s +\rho {\sigma(a_t) \over S_t a_t} \partial_a \right)u(t,S_t,a_t) \label{Delta} \eea where $u$ is the unique solution of the Burgers-like semi-linear PDE:
\bea \partial_t u + {\cal L}^0u  - {1 \over 2 }(1-\rho^2) (\sigma(a) \partial_a u)^2=0, \quad u(t_1,s,a):=f^*_1(s) \label{burgersSVM}\eea and the optimal potential is
\beaa e^{-f^*_1(K)}=Z {\mu_1(K) \over  \PP^0(S_{t_1}=K) \EE^{\PP^0}[ e^{-\int_0^{t_1} \Delta_s^* dS_s}|S_{t_1}=K] } \eeaa

\no  {\bf (3) } An equivalent formulation of the optimization $P_1$ is
\beaa P_1=\sup_{f_1\in \mathrm{L}^1(\mu_1)}\{ - \EE^{\mu_1}[f_1] +u(0,S_0,a_0) \}
 \eeaa

\end{thm}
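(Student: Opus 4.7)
The plan is to follow the entropy-penalization template of Proposition \ref{thm1} but with an additional Lagrange multiplier enforcing the martingality constraint. First I would cast $(\ref{opt1})$ as the convex dual of the primal problem
\begin{equation*}
P_1=\inf_{\PP\in\M(\mu_1)}H(\PP|\PP^0),\qquad \M(\mu_1):=\bigl\{\PP\sim\PP^0:\ S_{t_1}\overset\PP\sim\mu_1,\ S\text{ is a }\PP\text{-martingale}\bigr\},
\end{equation*}
attaching the Lagrange function $f_1$ to the marginal constraint and the Lagrange process $\Delta\in\A$ to the martingality constraint (which is equivalent to $\EE^\PP[\int_0^{t_1}\Delta_s dS_s]=0$ for every $\Delta\in\A$). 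The identity $\inf_\PP\{H(\PP|\PP^0)+\EE^\PP[X]\}=-\ln\EE^{\PP^0}[e^{-X}]$ collapses the inner infimum to exactly $(\ref{opt1})$; strict concavity in $(f_1,\Delta)$ together with Csisz\'ar's projection theorem delivers the unique pair $(f_1^*,\Delta_\cdot^*)$ and the attaining measure $d\PP/d\PP^0=e^{-f_1^*(S_{t_1})-\int_0^{t_1}\Delta_s^* dS_s}/Z$, with $Z$ the normalising constant. Setting the first variation in $f_1$ to zero yields $\EE^\PP[g(S_{t_1})]=\EE^{\mu_1}[g]$ for every test $g$, proving $S_{t_1}\overset\PP\sim\mu_1$ (claim~(1)); reading the same identity against Dirac masses in $K$ gives the stated expression for $e^{-f_1^*(K)}$, and stationarity in $\Delta$ produces $\EE^\PP[\int_0^{t_1}\delta_s dS_s]=0$ for every $\delta\in\A$, so $S$ is a (local) $\PP$-martingale.

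Next, to derive $(\ref{burgersSVM})$ and $(\ref{Delta})$, I introduce
\begin{equation*}
V(t,s,a):=\EE^{\PP^0}\bigl[e^{-f_1^*(S_{t_1})-\int_t^{t_1}\Delta_s^* dS_s}\,\big|\,S_t=s,a_t=a\bigr],\qquad u:=-\ln V,
\end{equation*}
and use the dynamic-programming interpretation of the inner $\Delta$-optimisation. The process $Y_t:=V(t,S_t,a_t)e^{-\int_0^t\Delta_s^* dS_s}$ is a $\PP^0$-martingale by the tower property; applying It\^o under $\PP^0$ (with $dS=Sa\,dW^0$ and $dZ^0=\rho dW^0+\sqrt{1-\rho^2}dZ^{0,\perp}$) and equating its drift to zero gives an expression quadratic in $\Delta^*$. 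Pointwise optimality (the first-order condition in $\Delta$) then yields $\Delta_t^*=\partial_sV/V+\rho\sigma(a_t)/(S_ta_t)\,\partial_aV/V=-\bigl(\partial_s+\rho\sigma(a_t)/(S_ta_t)\,\partial_a\bigr)u$, which is exactly $(\ref{Delta})$. Substituting this minimiser back into the drift equation and switching from $V$ to $u=-\ln V$ cancels the cross-term $\tfrac12(Sa\partial_su+\rho\sigma\partial_au)^2$ and leaves exactly the semilinear PDE $(\ref{burgersSVM})$ with terminal datum $u(t_1,s,a)=f_1^*(s)$.

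Finally, for the SDE $(\ref{SVMnew0})$ and claim (3), I would apply Girsanov to $L_t:=Y_t/Z=d\PP/d\PP^0|_{\FF_t}$. Expanding $dL_t/L_t$ under $\PP^0$, the coefficient along $dW^0$ equals $-(Sa\partial_su+\rho\sigma\partial_au)-\Delta^* Sa$, which \emph{vanishes} precisely by the first-order condition for $\Delta^*$; the only surviving martingale part is $-\sigma\sqrt{1-\rho^2}\,\partial_a u\,dZ^{0,\perp}$. Consequently Girsanov leaves $W$ unchanged (so $dS=Sa\,dW$ and $S$ remains a $\PP$-martingale) and shifts only the orthogonal driver of $a$, producing the drift $-(1-\rho^2)\sigma(a)^2\partial_au=(1-\rho^2)\sigma(a)^2\partial_a\ln V$ in $a_t$, which is $(\ref{SVMnew0})$. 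Claim (3) is then immediate: by definition of $u$ and optimality of $\Delta^*$, $-\ln\EE^{\PP^0}[e^{-f_1^*(S_{t_1})-\int_0^{t_1}\Delta^*dS}]=u(0,S_0,a_0)$, so the outer supremum reduces to $\sup_{f_1}\{-\EE^{\mu_1}[f_1]+u(0,S_0,a_0)\}$. The main difficulty will be justifying the saddle-point exchange and the integrability needed for $\int\Delta^*dS$ to be a bona fide It\^o integral in $\A=\mathrm{L}^2(\PP^0)$; this mirrors the classical saddle-point argument for Schr\"odinger bridges and is where the standing hypothesis $P_1<\infty$ enters crucially.
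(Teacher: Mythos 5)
Your proposal is correct and reaches all the paper's conclusions, but the first stage is organized differently. The paper parameterizes every $\PP\sim\PP^0$ via two Girsanov drifts $(\lambda^{(1)},\lambda^{(2)})$ on the decomposed Brownian motions $(W,W^\perp)$, writes $H(\PP|\PP^0)=\tfrac12\sum_i\EE^\PP[\int_0^{t_1}(\lambda^{(i)}_s)^2ds]$, and then works with a three-control HJB equation $\inf_{\lambda_1,\lambda_2}\sup_\Delta\{\cdots\}$, taking the infimum over $(\lambda_1,\lambda_2)$ first, Cole--Hopf transforming, and only then optimizing over $\Delta$; the martingale condition emerges there as $(\lambda^*)^{(1)}=0$. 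You instead collapse the inner infimum over $\PP$ in one stroke using the Gibbs (Donsker--Varadhan) identity $\inf_\PP\{H(\PP|\PP^0)+\EE^\PP[X]\}=-\ln\EE^{\PP^0}[e^{-X}]$, landing immediately on the dual \eqref{opt1}; you then run a single-control HJB in $\Delta$ for $V=e^{-u}$ and recover the martingality of $S$ under $\PP$ by observing that the $dW^0$-loading of $dL_t/L_t$ cancels exactly at the first-order condition for $\Delta^*$, so Girsanov only tilts the orthogonal driver $Z^{0,\perp}$. Both routes are legitimate; yours is shorter and closer to the standard Schr\"odinger-bridge presentation (and it makes claims~(1) and the explicit formula for $e^{-f_1^*}$ fall out of the stationarity conditions, which the paper leaves implicit), while the paper's route makes the stochastic-control structure and the cancellation $(\lambda^*)^{(1)}=0$ more visible. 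One small cleanup you should make: the sentence ``$Y_t$ is a $\PP^0$-martingale, equating its drift to zero gives an expression quadratic in $\Delta^*$, then pointwise optimality yields $\Delta^*$'' conflates two things --- the zero-drift identity holds for the value function of the $\Delta$-control problem by dynamic programming, not merely by the tower property applied to the fixed optimizer, and it is that HJB (with the inner $\inf_\Delta$ of the quadratic) whose first-order condition produces $\Delta^*$ and, upon substitution, the semilinear PDE \eqref{burgersSVM}; phrased that way the argument is airtight. You correctly flag the saddle-point exchange and the $\mathrm{L}^2$-integrability of $\int\Delta^*dS$ as the technical points needing care; the paper treats these at the same level of informality.
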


\noindent Our proofs are reported in the appendix.

\begin{rem}[Finite number of strikes] In practice, only a finite number of calls with strikes $K_1 < K_2 < \cdots < K_N$ are quoted in the market. Instead of calibrating the full marginal $\mu_1$ (which is unknown), we want to match
\beaa C(t_1,K_\alpha):=\EE^\PP[(S_{t_1}-K_\alpha)^+],\quad \forall \alpha=1,\cdots,N \eeaa Our theorem (\ref{thm2}) still applies where the potential $f_1$ is now restricted to be of the form:
\beaa f_1(s):=\sum_{\alpha=1}^N  \omega_\alpha (s-K_\alpha)^+\eeaa for some real parameters $(\omega_\alpha)_{1 \leq \alpha \leq N}$. \label{Finite number of strikes}
\end{rem}

\begin{rem}[Monte-Carlo Simulation] Instead of simulating under $\PP$, it is better to simulate under $\PP^0$, see previous Remark \ref{Monte-Carlo Simulation}.  The Radon-Nikodym derivative ${d\PP \over d\PP^0}|_{{\cal F}_t}$ is
\beaa {d\PP \over d\PP^0}|_{{\cal F}_t}={\EE^{\PP^0}[ e^{-f^*_1(S_{t_1})-\int_0^{t_1} \Delta^*_s dS_s}|{\cal F}_t]  \over
\EE^{\PP^0}[ e^{-f^*_1(S_{t_1})-\int_0^{t_1} \Delta^*_s dS_s}]}  \eeaa  Therefore, for a $\FF_t$-measurable payoff with $t \leq t_1$, we have
\beaa \EE^\PP[ \Phi_t ]={\EE^{\PP^0}[ \Phi_t  e^{-f^*_1(S_{t_1})-\int_0^{t_1} \Delta^*_s dS_s}] \over
\EE^{\PP^0}[ e^{-f^*_1(S_{t_1})-\int_0^{t_1} \Delta^*_s dS_s}]}  \eeaa
\end{rem}

\subsection{Numerical implementations} \label{Numerical implementations}

\no Our numerical algorithm can be described by the following steps:
\begin{enumerate}
\item Using a Newton gradient descent algorithm, we solve the concave problem:
\beaa \sup_{f_1\in \mathrm{L}^1(\mu_1)}\{ - \EE^{\mu_1}[f_1] +u(0,S_0,a_0) \} \eeaa At each step of the gradient iteration,  $u$ is obtained by solving the PDE  (\ref{burgersSVM}). We finally store $\Delta^*_s$ for $s\in [0,t_1)$ as given by Equation (\ref{Delta}) when the algorithm has converged. In practice, following Remark \ref{Finite number of strikes}, $f$ is decomposed over calls and the optimisation over $f_1$ is replaced by an optimization over $\omega \in \RR^N$. The gradient with respect to $\omega_\alpha$ is then given by
\beaa  - \EE^{\mu_1}[(S_{t_1}-K_\alpha)^+] +\Delta_\alpha(0,S_0,a_0)  \eeaa with $\Delta_\alpha$ the solution of the linear PDE:
\beaa \partial_t \Delta_\alpha + {\cal L}^0\Delta_\alpha  - (1-\rho^2) \sigma(a)^2 \partial_a u \partial_a \Delta_\alpha=0, \quad \Delta_\alpha(t_1,s,a)=(s-K_\alpha)^+ \eeaa
\item Finally, the pricing of an option with payoff $\Phi_{t_1}$ is performed by MC under the measure $\PP^0$ using
\beaa \EE^{\PP}[\Phi_{t_1}]={\EE^{\PP^0}[e^{-f^*_1(S_{t_1})-\int_0^{t_1} \Delta^*_s dS_s} \Phi_{t_1}] \over \EE^{\PP^0}[e^{-f_1(S_{t_1})-\int_0^{t_1} \Delta^*_s dS_s} ] }\eeaa
\end{enumerate}

\begin{exa}[Numerical examples on \texttt{TOTAL}]
We have checked our algorithm for  \texttt{TOTAL} Vanillas, pricing-date = 9/12/2018 and maturity =1.2 years. The naked SVM has been chosen to be a (log-normal) SABR model with $\alpha=15.8\%$, $\nu=40\%$  and $\rho=-61\%$ (see the smile denoted ``Naked SABR'' in Figure \ref{Vanillacheck}). In particular, the smile as produced by our naked SABR matches the at-the-money volatility but has an incorrect skew, due to our choice of the spot-volatility correlation.  Once the drift has been calibrated using the algorithm outlined above, we have repriced the Vanillas by Monte-Carlo and compare with the market prices.  We reproduce the market smile (see Figure \ref{Vanillacheck} where the blue and green curves coincide).  We give also the optimized weights $(\omega^*_\alpha)_{1 \leq \alpha \leq 20}$.

\begin{figure}[h]
\begin{center}
\includegraphics[width=7cm,height=5cm]{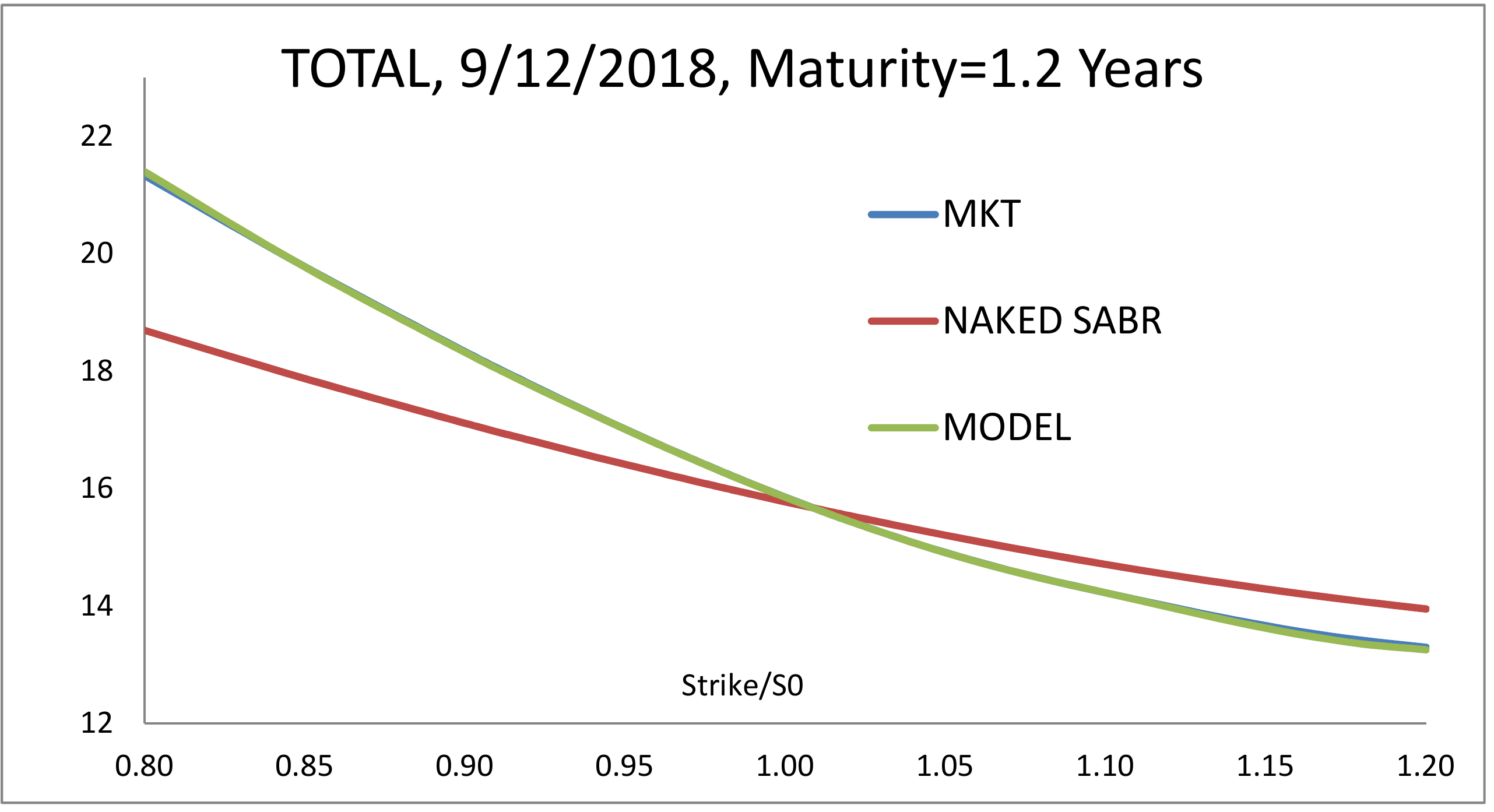}
\includegraphics[width=7cm,height=5cm]{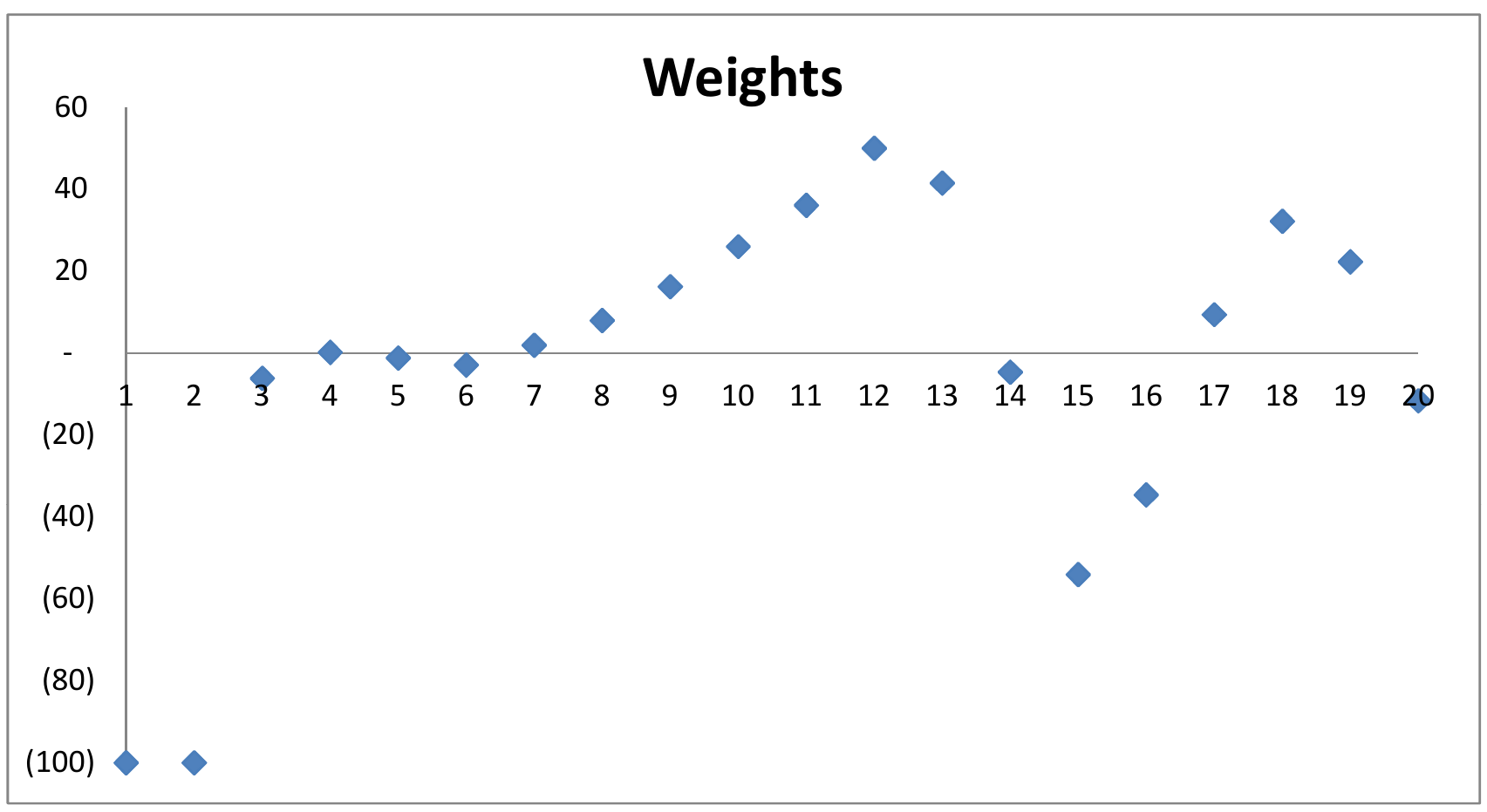}
\end{center}
\caption{Calibration to Vanillas \texttt{TOTAL}, pricing-date = 9/12/2018. Maturity=1.2 years. Left: Smiles. Right: optimized weights $(\omega^*_\alpha)_{1 \leq \alpha \leq 20}$ for the twenty strikes.}
 \label{Vanillacheck}
\end{figure}

\end{exa}

\subsubsection{Alternative}

\no The above algorithm requires to solve the non-linear Burgers-like PDE  (\ref{burgersSVM}). Below, we present an alternative algorithm, which requires only to solve linear PDEs.
\begin{enumerate}
\item By MC, simulate and store $N_\mathrm{MC}$ MC paths of $S_{t_1}$ under $\PP^0$. Compute and store $\PP^0(S_{t_1}=K)$  for different values of $K$ belonging to a one dimensional grid.
\item  Set $n:=1$. Set also $\Delta_s^{(0)}:=0$ and
\beaa e^{-f^{(0)}_1(K)}:= Z{\mu_1(K) \over \PP^0(S_{t_1}=K)} \eeaa
\item Compute $ \Delta_t^{(n)}:=\left( \partial_s +\rho {\sigma(a_t) \over S_t a_t} \partial_a \right) \ln U^n(t,s,a)$ with \beaa U^n(t,s,a):=\EE^{\PP^0}[ e^{-f^{(n-1)}_1(S_{t_1})-\int_t^{t_1} \Delta_s^{(n-1)} dS_s}|S_t,a_t] \eeaa \no by solving a {\it linear} parabolic PDE:
\beaa \partial_t U^{(n)} + {\cal L}^0 U^{(n)} + {1 \over 2} s^2 a^2 (\Delta^{n-1})^2 U^{(n)} -{1 \over 2}  s^2 a^2
(\Delta^{n-1}) \left( \partial_s +\rho {\sigma(a) \over s a } \partial_a \right)  U^{(n)}=0,  \\ \quad    U^{(n)}(t_1,s,a)=e^{-f^{(n-1)}_1(s)} \eeaa
 and set
\beaa e^{-f^{(n)}_1(K)}:= Z {\mu_1(K) \over \PP^0(S_{t_1}=K)}{1 \over \EE^{\PP^0}[  e^{-\int_0^{t_1} {\Delta}^{(n)}_t dS_t} |S_{t_1}=K] } \eeaa using a MC simulation.
\item Set $n:=n+1$ and iterate Step (3-4) up to convergence.
\end{enumerate}

\subsection{Multi-marginals}

\begin{thm}[Marginals $\mu_1$ and $\mu_2$]  \label{thm2bis} Let us consider the strictly concave  optimization problem:
\bea P_{12}:=\sup_{f_2 \in \mathrm{L}^1(\mu_2) ,\Delta_{s \in (t_1,t_2)} \in {\cal A}} -\EE^{\mu_2}[f_2] -\EE^{\PP}[\ln \{ \EE^{\PP^0}[e^{- f_2(S_{t_2})-\int_{t_1}^{t_2}\Delta_s dS_s}|S_{t_1},a_{t_1}] \} ]  \label{opt2} \eea and assume that $P_{12}<\infty$. We denote $f^*_2$, $(\Delta_s^*)_{s \in (t_1,t_2)}$ the unique solutions.  Let us consider the SVM defined under a measure $\PP$ by
\bea dS_t&=&S_t a_t dW_t,  \quad d\langle Z, W\rangle_t=\rho dt  \label{SVMnew} \\
da_t&=&\left(b(a_t)+(1-\rho^2)\sigma(a_t)^2 \partial_a \ln \EE^{\PP^0}[ e^{-f^*_2(S_{t_2})-\int_t^{t_2}\Delta^*_s dS_s} |S_t,a_t]\right)dt+ \sigma(a_t) dZ_t \nonumber  \eea
 for all $t \in [t_1, t_2]$ and SDE (\ref{SVMnew0}) for $[t_0,t_1]$.

\no {\bf (1)} Then, $ S_{t_i} \overset{\PP}{\sim} \mu_i, \quad i=1,2$.

\no  {\bf (2) } An equivalent formulation of the optimization $P_{12}$ is
\beaa P_{12}=\sup_{f_2\in \mathrm{L}^1(\mu_2)}\{ - \EE^{\mu_2}[f_2] +\EE^\PP[ u(t_1,S_{t_1},a_{t_1})] \}
 \eeaa where $u$ is the unique solution of the Burgers-like PDE:
\bea \partial_t u + {\cal L}^0u  - {1 \over 2 }(1-\rho^2) (\sigma(a) \partial_a u)^2=0, \quad u(t_2,s,a)=f_2(s)\eea

\no  {\bf (3) } An equivalent formulation of the optimization $P_{12}$  is also
\beaa P_{12}:=\sup_{f_i \in \mathrm{L}^1(\mu_i),\Delta_{s \in (0,t_2)} \in {\cal A}}\{ - \sum_{i=1}^2 \EE^{\mu_i}[f_i]-\ln \EE^{\PP^0}[e^{-\sum_{i=1}^2 f_i(S_{t_i})-\int_{0}^{t_2}\Delta_s dS_s}] \}   \eeaa

\end{thm}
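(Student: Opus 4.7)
The plan is to reduce the two-marginal statement to a conditional version of Theorem \ref{thm2} via the Markov property of the naked SVM and an entropic decoupling across the intervals $[0, t_1]$ and $[t_1, t_2]$, in the spirit of equation (\ref{decouplingentropy}). On $[0, t_1]$ the dynamics (\ref{SVMnew0}) are already handled by Theorem \ref{thm2} and deliver $S_{t_1} \overset{\PP}{\sim} \mu_1$ together with the local martingale property. On $[t_1, t_2]$ the drift appearing in (\ref{SVMnew}) is of exactly the same form as in Theorem \ref{thm2} but with starting time shifted to $t_1$ and initial state $(S_{t_1}, a_{t_1})$ random; the outer $\EE^\PP[\cdot]$ in (\ref{opt2}) is precisely the $\PP$-average over that random initial state, with the $\PP$-law of $(S_{t_1}, a_{t_1})$ fixed by the first stage.

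\textbf{Proof of (1).} First-order optimality of the strictly concave functional in (\ref{opt2}) in the direction $f_2 \mapsto f_2 + \eps g$ yields, for every $g \in \mathrm{L}^1(\mu_2)$,
\beaa \EE^{\mu_2}[g] = \EE^\PP\left[ \frac{\EE^{\PP^0}[g(S_{t_2}) e^{-f^*_2(S_{t_2}) - \int_{t_1}^{t_2} \Delta^*_s dS_s} | S_{t_1}, a_{t_1}]}{\EE^{\PP^0}[e^{-f^*_2(S_{t_2}) - \int_{t_1}^{t_2} \Delta^*_s dS_s} | S_{t_1}, a_{t_1}]} \right]. \eeaa
By the Girsanov computation of Remark \ref{Monte-Carlo Simulation}, applied conditionally on $\FF_{t_1}$, the fraction equals $\EE^\PP[g(S_{t_2}) | S_{t_1}, a_{t_1}]$, so $\EE^{\mu_2}[g] = \EE^\PP[g(S_{t_2})]$ and hence $S_{t_2} \overset{\PP}{\sim} \mu_2$. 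Variations in the direction $\Delta \mapsto \Delta + \eps h$ with $h \in {\cal A}$ supported in $(t_1, t_2)$ yield $\EE^\PP[\int_{t_1}^{t_2} h_s dS_s] = 0$, forcing $S$ to be a local $\PP$-martingale on $[t_1, t_2]$; combined with martingality on $[0, t_1]$ from Theorem \ref{thm2}, the no-arbitrage constraint holds throughout $[0, t_2]$.

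\textbf{Proofs of (2) and (3).} For (2), conditionally on $(S_{t_1}, a_{t_1})$, the inner supremum
\beaa \sup_{\Delta \in {\cal A}} \left( -\ln \EE^{\PP^0}[e^{-f_2(S_{t_2}) - \int_{t_1}^{t_2} \Delta_s dS_s} | S_{t_1}, a_{t_1}] \right) \eeaa
is exactly the conditional form of the one-marginal problem on $[t_1, t_2]$, so by Theorem \ref{thm2}(3) it equals $u(t_1, S_{t_1}, a_{t_1})$ where $u$ solves the Burgers-like PDE (\ref{burgersSVM}) with terminal data $u(t_2, s, a) = f_2(s)$. Taking unconditional $\PP$-expectation and then supremum over $f_2$ delivers the reformulation. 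For (3), the tower property and $\FF_{t_1}$-measurability of the $[0, t_1]$ stochastic integral give
\beaa \EE^{\PP^0}\left[ e^{-\sum_{i=1}^2 f_i(S_{t_i}) - \int_0^{t_2} \Delta_s dS_s} \right] = \EE^{\PP^0}\left[ e^{-f_1(S_{t_1}) - \int_0^{t_1} \Delta_s dS_s}\, \EE^{\PP^0}[e^{-f_2(S_{t_2}) - \int_{t_1}^{t_2} \Delta_s dS_s} | \FF_{t_1}] \right]. \eeaa
Applying Theorem \ref{thm2} on the outer interval $[0, t_1]$ with the effective $\FF_{t_1}$-measurable payoff $e^{-f_1(S_{t_1})}\EE^{\PP^0}[e^{-f_2(S_{t_2}) - \int_{t_1}^{t_2} \Delta_s dS_s} | \FF_{t_1}]$, and then combining with the supremum over $f_2$, identifies the joint supremum with $P_{12}$.

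\textbf{Main obstacle.} The delicate step is not the variational calculus but ensuring that the conditionally optimal $\Delta^*$ on $(t_1, t_2)$ inherits a Markovian feedback form $\Delta^*_s = \tilde \Delta(s, S_s, a_s)$ consistent across all realizations of $(S_{t_1}, a_{t_1})$; this is precisely what allows the interchange of supremum and conditional expectation in (2) and the clean pasting of the two stages in (3). This feedback structure is the dynamic programming content of the Burgers-like PDE (\ref{burgersSVM}), which must be imported from Theorem \ref{thm2}(3) with care, together with integrability hypotheses guaranteeing $P_{12} < \infty$ so that Csisz\'ar-type projection ensures uniqueness of the optimizers.
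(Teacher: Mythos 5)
Your proof is correct, but it is organized in essentially the opposite direction from the paper's. The paper's own (terse) proof goes directly from the primal entropic problem $P_{2}:=\inf_{\PP\in{\cal M}_{\mathrm{mart}}(\mu_1,\mu_2)}H(\PP|\PP^0)$ to formulation (3) by replicating the global convex duality of Theorem~\ref{thm2} over the full horizon $[0,t_2]$ (Girsanov to express $H(\PP|\PP^0)$ through the two drifts $\lambda^{(1)},\lambda^{(2)}$, Lagrange multipliers $f_1,f_2,\Delta$ for the two marginal and the martingale constraints, inner HJB/Cole--Hopf), and then implicitly reads off~(\ref{opt2}) and (2) via the tower property. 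You instead take~(\ref{opt2}) as the starting point, derive (1) by first-order stationarity plus the conditional Girsanov identity $\EE^\PP[g(S_{t_2})|\FF_{t_1}] = \EE^{\PP^0}[g(S_{t_2})e^{-f_2^*-\int\Delta^*dS}|\FF_{t_1}]/\EE^{\PP^0}[e^{-f_2^*-\int\Delta^*dS}|\FF_{t_1}]$, and then build (2) and (3) by disintegrating the entropy across $t_1$ (in the spirit of~(\ref{decouplingentropy})) and pasting the one-marginal theorem on each of $[0,t_1]$ and $[t_1,t_2]$. What your route buys is a more explicit exposure of the dynamic-programming structure: you make precise that the interchange of $\sup_\Delta$ with $\EE^\PP[\cdot]$ hinges on the Markovian feedback form $\Delta_s^*=\tilde\Delta(s,S_s,a_s)$ delivered by the Burgers-like PDE, and you explain concretely why the stages decouple. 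One small caveat you should make explicit in your step~(3): when you apply Theorem~\ref{thm2} on $[0,t_1]$ with the ``effective'' terminal payoff $e^{-f_1(S_{t_1})}\EE^{\PP^0}[e^{-f_2(S_{t_2})-\int_{t_1}^{t_2}\Delta_sdS_s}|\FF_{t_1}]$, that payoff is a function of both $S_{t_1}$ \emph{and} $a_{t_1}$, so you are really invoking the underlying Girsanov/HJB machinery of the proof of Theorem~\ref{thm2} for a terminal condition depending on $(s,a)$, not the theorem's statement verbatim (whose terminal data $f_1^*(s)$ depends on $s$ alone). The machinery goes through unchanged, but stating this avoids an apparent mismatch with the hypotheses of Theorem~\ref{thm2}.
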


\no By sequentially applying Theorem \ref{thm2bis}, as our SVM is a Markov process, one can then construct a SVM, obtained by concatenation of diffusions (\ref{SVMnew}) (and (\ref{SVMnew0}) for the interval $[t_0,t_1]$) over the intervals $[t_i,t_{i+1}]$ ($i=0,\cdots,n-1$),  such that $ S_{t_i} \overset{\PP}{\sim} \mu_i, \quad i=1,\cdots,n $. By construction, this SVM leads also to a convex-order interpolation of the marginals $(\mu_i)_{1 \leq i \leq n}$ as $S_t$ is a martingale:
\begin{cor}[Entropic convex-order interpolation] Under $\PP$,
\beaa \mu_{i-1} \overset{\mathrm{conv}}{\leq} \mathrm{Law}(S_t) \overset{\mathrm{conv}}{\leq} \mu_{i},\quad \forall t  \in [t_{i-1},t_i],\quad \forall i=1,\cdots,n   \eeaa
\end{cor}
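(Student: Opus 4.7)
The statement is a classical consequence of the martingale property of $S$, once the marginal constraints from Theorem \ref{thm2bis} are in hand. My plan is therefore to (i) observe that $S_t$ is a true $\PP$-martingale on every subinterval $[t_{i-1},t_i]$, (ii) apply conditional Jensen to obtain convex ordering between any two times in such an interval, and (iii) pin down the endpoints using $S_{t_{i-1}}\overset{\PP}{\sim}\mu_{i-1}$ and $S_{t_i}\overset{\PP}{\sim}\mu_i$.

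\textbf{Step 1: martingality.} By construction (see SDE (\ref{SVMnew})), the spot satisfies $dS_t=S_t a_t dW_t$ under $\PP$, so $S$ is at least a local $\PP$-martingale. Since $\PP\sim\PP^0$ via the bounded Radon--Nikodym density ${\EE^{\PP^0}[e^{-f^*_i(S_{t_i})-\int_t^{t_i}\Delta^*_s dS_s}|\FF_t]}/{\EE^{\PP^0}[e^{-f^*_i(S_{t_i})-\int_0^{t_i}\Delta^*_s dS_s}]}$ and the naked SVM is a true $\PP^0$-martingale under the standing assumptions on $b,\sigma,\rho$, the integrability $\EE^\PP[S_t]=\EE^\PP[S_0]<\infty$ for $t\le t_n$ follows from the prescribed marginals ($\int |s|\,\mu_i(ds)<\infty$ is implicit in the calibration to Vanillas). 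Hence $S$ is a genuine $\PP$-martingale on $[0,t_n]$.

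\textbf{Step 2: convex ordering by conditional Jensen.} Fix $i$ and take any $s\le t$ in $[t_{i-1},t_i]$. For every convex $\varphi:\RR_+\to\RR$ with $\EE^\PP[|\varphi(S_t)|]<\infty$, conditional Jensen gives
\begin{equation*}
\EE^\PP[\varphi(S_s)]=\EE^\PP[\varphi(\EE^\PP[S_t\mid\FF_s])]\le \EE^\PP[\EE^\PP[\varphi(S_t)\mid\FF_s]]=\EE^\PP[\varphi(S_t)],
\end{equation*}
so $\mathrm{Law}(S_s)\overset{\mathrm{conv}}{\le}\mathrm{Law}(S_t)$. Specialising to $s=t_{i-1}$ (where $\mathrm{Law}(S_s)=\mu_{i-1}$ by Theorem \ref{thm2bis}) and to $t=t_i$ (where $\mathrm{Law}(S_t)=\mu_i$) yields the two desired inequalities for every intermediate $t\in[t_{i-1},t_i]$.

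\textbf{Main obstacle.} The only nontrivial point is upgrading from local to true martingality after one adds the drift $(1-\rho^2)\sigma(a)^2\partial_a u$ to $a_t$: the modified volatility process could a priori blow up faster than the naked one, breaking $\EE^\PP[S_t]<\infty$. This is handled by the equivalence $\PP\sim\PP^0$, by the boundedness in $L^1(\PP^0)$ of the Radon--Nikodym derivative up to $t_n$, and by the standing assumption (carried over from the naked SVM) that $S$ is a true $\PP^0$-martingale. Once this integrability is secured, the convex ordering reduces to a one-line application of Jensen's inequality, and the corollary follows by patching the intervals together.
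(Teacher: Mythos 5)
Your proof is correct and follows the same route the paper implicitly takes: the paper's one-line justification ``as $S_t$ is a martingale'' is exactly what your Step 2 spells out via conditional Jensen, patched across the intervals $[t_{i-1},t_i]$. One small sharpening of Step 1: the passage from local to true $\PP$-martingality is cleanest via the supermartingale property of the nonnegative local martingale $S$ together with the constancy $\EE^\PP[S_{t_i}]=\int s\,\mu_i(ds)=S_0$ forced by the calibration constraints (non-emptiness of ${\cal M}_{\mathrm{mart}}$), rather than leaning on $\PP\sim\PP^0$ and the $\PP^0$-martingale property, since an equivalent change of measure can in general turn a true martingale into a strict local one.
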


\subsection{Numerical implementation}

\no The numerical algorithm can be described by the following steps which combine a Monte-Carlo simulation and a PDE solver:
\begin{enumerate}
\item Once the model between $[0,t_1]$ has been calibrated following the numerical method described in Section \ref{Numerical implementations}, we simulate and store $N_\mathrm{MC}$ Monte-Carlo paths $(S^{(i)}_{t_1},a^{(i)}_ {t_1})_{1 \leq i \leq N_\mathrm{MC}}$ under $\PP^0$ and also store  the Radon-Nikodym derivative $(G_i)_{1 \leq i \leq N_\mathrm{MC}}$ for each MC paths:
\beaa G_{t_1}^{(i)}:={e^{-f^*_1(S^{(i)}_{t_1})-\int_0^{t_1} \Delta_s^{*,(i)} dS^{(i)}_s} \over e^{-u(t_0,S^{(i)}_{t_0},a^{(i)}_{t_0})}}  \eeaa In practice, the It\^o integral $\int_0^{t_1} \Delta_s^* dS_s$ is discretized using an Euler scheme.
\item Using a Newton gradient descent algorithm, we solve the concave problem (over $\RR^N$):
\beaa \sup_{f_2\in \mathrm{L}^1(\mu_2)}\{ - \EE^{\mu_1}[f_2] + {1 \over N_\mathrm{MC}} {\sum_{i=1}^{N_\mathrm{MC}} G_{t_1}^{(i)} u(t_1,S^{(i)}_{t_1},a^{(i)}_{t_1})
 } \} \eeaa At each step of the gradient iteration,  $u$ is obtained by solving the PDE  (\ref{burgersSVM}) between $[t_1,t_2]$. In practise the distance between two Vanilla maturities is less than 6 months and therefore the numerical solution of the Burgers PDE is fast. This is identical to Step (1) in Section \ref{Numerical implementations}. We store $\Delta_s^*$ for $s \in [t_1,t_2)$.
\item Simulate and store $N_\mathrm{MC}$ Monte-Carlo paths $(S^{(i)}_{t_2},a^{(i)}_ {t_2})_{1 \leq i \leq N_\mathrm{MC}}$ under $\PP^0$ and update  the Radon-Nikodym derivative $(G^{(i)})_{1 \leq i \leq N_\mathrm{MC}}$  for each MC paths at $t_2$ by
\beaa G_{t_2}^{(i)}:= G_{t_1}^{(i)} \times {e^{-f^*_2(S^{(i)}_{t_2})-\int_{t_1}^{t_2} \Delta_s^{*,(i)} dS^{(i)}_s} \over e^{-u(t_1,S^{(i)}_{t_1},a^{(i)}_{t_1})}} \eeaa
\item Iterate Steps (2-3) until the last maturity $t_n$.
\item Finally the (undiscounted) price of an option with path-dependent payoff $\Phi(S_{t_1},S_{t_2},\cdots,S_{t_n})$ is given for $N_\mathrm{MC}$ large  by
\beaa  \EE^{\PP}[\Phi(S_{t_1},S_{t_2},\cdots,S_{t_n})] \approx {1 \over N_\mathrm{MC}} \sum_{i=1}^{ N_\mathrm{MC}} G_{t_n}^{(i)} \Phi(S^{(i)} _{t_1},S^{(i)} _{t_2},\cdots,S^{(i)} _{t_n})\eeaa
\end{enumerate}

\begin{rem}[Multi-dimensional SVM] Our algorithm requires to solve a two-dimensional nonlinear Burgers PDE. In the case of a multi-factor SVM, the numerical solution could not be obtained using a deterministic scheme which suffers from the curse of dimensionality. One possible tentative could be to use a Monte-Carlo algorithm for solving the Burgers PDE as described in \cite{tal}. We left this for future research.
\end{rem}

\section{A new class of SVMs matching Market instruments} \label{A new class of SVMs matching Market instruments}

\no In this section, we generalize our previous results where  we replace the Vanillas by some other market instruments. Instead of presenting the extension  in full generality with arbitrary market instruments, we give three (relevant) examples. The first considers options written on 	realized variance, the second one options on running maximum and the last one both Vanillas and options on VIX.

\subsection{SVM matching options on variance}

\no Let us consider a naked SVM for which the instantaneous volatility is under  $\PP^0$:
\beaa da_t= b(a_t)dt+\sigma(a_t) dZ_t^0 \eeaa For use below, we denote $V_t:=\langle \ln S \rangle_t=\int_0^t a_s^2 ds$. As in previous sections, we want to add a drift term such that we calibrate (at $t=0$)  market prices of options on variance $C_\mu^\mathrm{mkt}(K)$ written on the quadratic variation  $V_T$, meaning that
\beaa \EE^\PP[\left(V_T -K\right)^+]=C_\mu^\mathrm{mkt}(K),\quad \forall K \in \RR_+ \eeaa By differentiating twice with respect to $K$, this is equivalent to impose the marginal of $V_T$:
\beaa V_T\overset{\PP}{\sim}\mu:=\partial_K^2 C_\mu^\mathrm{mkt} \eeaa

\begin{thm}\label{thm3} Let us consider the strictly concave optimization
 problem:
\beaa P_1:=\sup_{f \in \mathrm{L}^1(\mu)} -\EE^\mu[f]-\ln \EE^{\PP^0}[  e^{-f(V_T)} ] \eeaa and assume that $P_1<\infty$.  We denote $f^*$ the unique solution.  Let us consider the SVM defined under $\PP$ by
\beaa da_t= \left( b(a_t)+\sigma(a_t)^2 \partial_a \ln \EE^{\PP^0}[  e^{-f^*(V_T)} | a_t,V_t] \right)dt + \sigma(a_t) dZ_t \eeaa Then $\int_0^T a_s^2 ds \overset{\PP}{\sim} \mu$.
\end{thm}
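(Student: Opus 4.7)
The plan is to mirror the proof of Theorem~\ref{thm2}, but with a crucial simplification: because $V_T=\int_0^T a_s^2\,ds$ is a functional of the volatility path alone, hence of $Z^0$ alone, no It\^o-integral hedging term $\int\Delta_s\,dS_s$ needs to be added to the dual---the marginal constraint on $V_T$ imposes no direct restriction on the Brownian $W^0$ driving $S_t$, so the martingality condition simply drops out.

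\textbf{Step 1 (duality).} I would reformulate the constrained entropy minimization $\inf_{\PP\sim\PP^0 :\, V_T\overset{\PP}{\sim}\mu} H(\PP|\PP^0)$ and apply Csiszar's projection theorem (as invoked in Section~\ref{Decoupling}) to obtain a unique optimal $\PP^*$ with
\beaa {d\PP^* \over d\PP^0} = {e^{-f^*(V_T)} \over \EE^{\PP^0}[e^{-f^*(V_T)}]}, \eeaa
where $f^*$ is the unique maximizer of the strictly concave dual $-\EE^\mu[f]-\ln\EE^{\PP^0}[e^{-f(V_T)}]$. The first-order condition $\mu(dv)=e^{-f^*(v)}\PP^0(V_T\in dv)/Z$ then makes $V_T\overset{\PP^*}{\sim}\mu$ tautologically.

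\textbf{Step 2 (Girsanov drift).} Introduce the $\PP^0$-martingale $h_t:=\EE^{\PP^0}[e^{-f^*(V_T)}\mid\FF_t]$. Since $V_T$ is $\sigma(Z^0_s:s\leq T)$-measurable and $(a,V)$ is jointly Markov under $\PP^0$, one obtains $h_t=h(t,a_t,V_t)$, where $h$ solves the linear backward PDE
\beaa \partial_t h + b(a)\partial_a h + a^2\partial_V h + \tfrac{1}{2}\sigma(a)^2\partial_{aa}h = 0,\quad h(T,a,V)=e^{-f^*(V)}. \eeaa
It\^o's formula collapses to $dh_t=\sigma(a_t)\partial_a h\,dZ^0_t$, with no $dW^0$ contribution. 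Girsanov then produces the new $\PP^*$-Brownian $dZ_t:=dZ^0_t-\sigma(a_t)\partial_a\ln h\,dt$, and substituting into the $\PP^0$-SDE for $a$ gives precisely the drift $b(a_t)+\sigma(a_t)^2\partial_a\ln h$ appearing in the theorem.

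\textbf{Step 3 and main obstacle.} Weak uniqueness of the transformed SDE for $(a_t,V_t)$ identifies the law under the announced $\PP$ with the law under $\PP^*$, whence $V_T\overset{\PP}{\sim}\mu$. The main difficulty lies in the martingale-representation collapse in Step~2: one must rigorously justify that $h_t$ depends only on $(t,a_t,V_t)$ and is driven by $dZ^0$ alone. Writing $W^0=\rho Z^0+\sqrt{1-\rho^2}\,B^0$ with $B^0\perp Z^0$, the conditional expectation projects onto the $Z^0$-filtration, and Markovianity of $(a,V)$ in $Z^0$ then delivers the functional form. The remaining technicalities---$\mu$-integrability of $f^*$, $\PP^0$-integrability of $e^{-f^*(V_T)}$, and well-posedness of the transformed SDE---are standard under the hypothesis $P_1<\infty$ and proceed exactly as in the appendix proof of Theorem~\ref{thm2}.
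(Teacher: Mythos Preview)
Your proof is correct but takes a different route from the paper's. The paper declares the proof ``identical to the proof of Proposition~\ref{thm1}'', i.e., the F\"ollmer stochastic-control route: dualize the constrained entropy minimization, solve the inner HJB over the drift $\lambda$, obtain a Burgers PDE, linearize via Cole--Hopf, and recognize the resulting linear solution as the conditional expectation $\EE^{\PP^0}[e^{-f^*(V_T)}\mid a_t,V_t]$. You instead invoke Csisz\'ar's projection to write down the Radon--Nikodym density $d\PP^*/d\PP^0=e^{-f^*(V_T)}/Z$ directly, and then read off the drift by a Doob $h$-transform---Girsanov applied to the $\PP^0$-martingale $h_t=\EE^{\PP^0}[e^{-f^*(V_T)}\mid a_t,V_t]$---in the spirit of Section~2.2 rather than the appendix. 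Your route is more economical here precisely because, as you observe, the pair $(a,V)$ is an autonomous Markov process driven by $Z^0$ alone; no martingale side-constraint on $S$ enters, and the HJB/Burgers/Cole--Hopf machinery becomes overhead. The paper's control formulation, though heavier in this instance, is what unifies the result with the martingale-constrained Theorems~\ref{thm2}--\ref{thm2bis}, where the dual hedging process $\Delta$ cannot be eliminated and the Burgers-type PDE is genuinely needed. One small presentational point: you open by announcing you will ``mirror the proof of Theorem~\ref{thm2}'', but what you actually execute is the Doob-transform argument of Section~2.2, not the HJB computation of either appendix proof.
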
 \no The proof is identical to the proof of Proposition \ref{thm1}, and is therefore not reproduced. By using the Bismut-Elworthy-Li formula, the drift term can be put into a Brownian-bridge like-form as in Corollary \ref{corbaudoin}:
\begin{lem}[Bismut-Elworthy-Li formula]
\beaa \partial_a  \EE^{\PP^0}[  e^{-f(V_T)} | a_t,V_t]= {1 \over T-t}\EE^{\PP^0}[e^{-f(V_T)} \int_t^T {Y_s \over \sigma(a_s)} dZ_s^0| a_t,V_t]  \eeaa where the tangent process $Y_s$ is
$ {dY_s \over Y_s}= b'(a_s)ds+\sigma'(a_s) dZ_s^0$.
\end{lem}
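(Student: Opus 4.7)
My plan is to prove this as a direct application of the classical Bismut--Elworthy--Li integration-by-parts formula, which delivers exactly this Brownian-bridge representation for the gradient of a diffusion functional with respect to its initial condition.

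Set $u(t,a,V) := \EE^{\PP^0}[e^{-f(V_T)}\mid a_t=a, V_t=V]$. Since $V$ enters $V_T = V + \int_t^T a_s^2\,ds$ only additively, the non-trivial sensitivity is $\partial_a u$, and an infinitesimal shift of the initial datum $a_t$ propagates along the trajectory through the tangent process $Y_s$ already introduced in the statement (with $Y_t = 1$).

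For the key step I would realise the perturbation $a \mapsto a+\varepsilon$ of the initial condition as an equivalent Girsanov measure change on $Z^0$. Define the adapted weight $h_s := Y_s/((T-t)\sigma(a_s))$; it is normalised so that $\int_t^T \sigma(a_s)\, h_s / Y_s\,ds = 1$, which is exactly the condition ensuring that the cumulative impulse delivered to $a_s$ by the drift $\varepsilon h_s \sigma(a_s)$ matches, at first order in $\varepsilon$, a shift of the initial value by $\varepsilon$. Writing $d\PP^\varepsilon/d\PP^0 = \mathcal{E}\bigl(\varepsilon \int h\,dZ^0\bigr)_T$ and differentiating the identity $u(t,a+\varepsilon,V) = \EE^{\PP^\varepsilon}[e^{-f(V_T)}\mid a_t=a,V_t=V] + o(\varepsilon)$ at $\varepsilon = 0$ gives
\[
\partial_a u \;=\; \EE^{\PP^0}\!\Bigl[e^{-f(V_T)} \int_t^T h_s\,dZ_s^0 \,\Big|\, a_t, V_t\Bigr] \;=\; \frac{1}{T-t}\,\EE^{\PP^0}\!\Bigl[e^{-f(V_T)} \int_t^T \frac{Y_s}{\sigma(a_s)}\,dZ_s^0 \,\Big|\, a_t, V_t\Bigr].
\]

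The main obstacle is to justify these formal manoeuvres rigorously. The exponential weight must be a genuine martingale (a Novikov-type bound on $Y/\sigma(a)$), and one has to differentiate under the expectation while commuting the $\partial_a$ with the diffusion flow. Both follow when $\sigma$ is bounded away from zero, $b$ and $\sigma$ are smooth with suitable growth so that the tangent process $Y$ has moments of all orders, and $e^{-f(V_T)}$ enjoys enough integrability; the SABR/Bergomi-type coefficients used elsewhere in the paper satisfy these comfortably, so the argument goes through.
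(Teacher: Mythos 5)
The paper itself gives no proof of this lemma (it simply invokes the name ``Bismut--Elworthy--Li''), so your argument has to stand on its own, and its key step does not hold here. The normalisation $\int_t^T \sigma(a_s)h_s/Y_s\,ds=1$ that you impose guarantees that the drift tilt $\varepsilon h_s\sigma(a_s)$ reproduces the effect of the shift $a_t\mapsto a_t+\varepsilon$ on the \emph{terminal} value only: the tilt perturbs $a_T$ by $\varepsilon Y_T\int_t^T\sigma(a_s)h_sY_s^{-1}\,ds=\varepsilon Y_T$, but it perturbs an intermediate value $a_r$ by $\varepsilon Y_r\int_t^r\sigma(a_s)h_sY_s^{-1}\,ds=\varepsilon Y_r\,\frac{r-t}{T-t}$, whereas the initial-condition shift perturbs it by $\varepsilon Y_r$. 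Since the functional is $e^{-f(V_T)}$ with $V_T=V_t+\int_t^T a_r^2\,dr$ a functional of the \emph{whole path} of $a$ (the pair $(a,V)$ is a degenerate diffusion, not an elliptic one), the identity $u(t,a+\varepsilon,V)=\EE^{\PP^\varepsilon}[e^{-f(V_T)}\mid a_t,V_t]+o(\varepsilon)$ that you differentiate is false: under $\PP^\varepsilon$ the first-order change of $V_T$ is $\varepsilon\int_t^T 2a_rY_r\frac{r-t}{T-t}\,dr$, while the true derivative requires $\varepsilon\int_t^T 2a_rY_r\,dr$.

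The discrepancy is not a technicality. Take $b=0$, $\sigma\equiv 1$ (so $a_s=a+Z^0_s-Z^0_t$ and $Y_s\equiv 1$) and $f(v)=\lambda v$. To first order in $\lambda$ the left-hand side is $\partial_a\EE[e^{-\lambda V_T}]=-2\lambda a(T-t)+O(\lambda^2)$, whereas Gaussian integration by parts gives for the right-hand side $\frac{1}{T-t}\EE[e^{-\lambda V_T}(Z^0_T-Z^0_t)]=-\frac{\lambda}{T-t}\int_t^T 2a(r-t)\,dr+O(\lambda^2)=-\lambda a(T-t)+O(\lambda^2)$, off by a factor of two. Equivalently, the martingale-representation computation shows that with this weight the right-hand side equals $\partial_a u(t,a,V)-\frac{1}{T-t}\EE[\int_t^T 2a_rY_r(T-r)\,\partial_Vu(r,a_r,V_r)\,dr]$, and the correction term does not vanish for nonconstant $f$. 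So a correct representation needs either a weight built from $D_sV_T$ (as in Malliavin-weight formulas for Asian-type payoffs) or the explicit extra term involving $\partial_V u$; the integrability conditions you list at the end are reasonable but moot until this algebraic identity is repaired.
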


\no By differentiating with respect to $f$ the functional $-\EE^\mu[f]-\ln \EE^{\PP^0}[  e^{-f(V_T)} ]$, the optimal potential $f^*$ satisfies
\bea Z \mu(v)= e^{-f^*(v)} \PP^0(V_T=v)  \label{eqfvariance} \eea with $Z$ an irrelevant function. Theorem \ref{thm3} can be simplified into
\begin{cor}[Conditioned SDE on quadratic variation, see also Theorem 4.6 in \cite{bau}] Let us assume that $f^*$ as defined by (\ref{eqfvariance})  is $\mu$-integrable. Let us consider the SVM defined under $\PP$ by
\beaa da_t= \left( b(a_t)+ {\sigma(a_t)^2 \over T-t}{ \EE^{\PP^0}[{\mu(V_T) \over \PP^0(V_T)} \int_t^T {Y_s \over \sigma(a_s)} dZ_s^0| a_t,V_t] \over \EE^{\PP^0}[  {\mu(V_T) \over \PP^0(V_T)} | a_t,V_t]} \right)dt  + \sigma(a_t) dZ_t \eeaa
Then $\int_0^T a_s^2 ds \overset{\PP}{\sim} \mu$.
\end{cor}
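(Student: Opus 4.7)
The plan is to derive this corollary as a direct computational consequence of Theorem \ref{thm3}, by substituting the explicit form of the optimal potential $f^{*}$ given by equation (\ref{eqfvariance}) into the drift term and then applying the Bismut-Elworthy-Li lemma to replace the spatial derivative $\partial_a$ by a stochastic integral against $Z^{0}$.

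First I would recall from Theorem \ref{thm3} that the calibrated drift takes the form $b(a_{t}) + \sigma(a_{t})^{2}\,\partial_{a}\ln \EE^{\PP^{0}}[e^{-f^{*}(V_{T})}\mid a_{t},V_{t}]$, and note that by the optimality relation (\ref{eqfvariance}) we may write
\begin{equation*}
\EE^{\PP^{0}}[e^{-f^{*}(V_{T})}\mid a_{t},V_{t}] \;=\; Z\,\EE^{\PP^{0}}\!\left[\tfrac{\mu(V_{T})}{\PP^{0}(V_{T})}\,\Big|\,a_{t},V_{t}\right],
\end{equation*}
where $Z$ is a constant that does not depend on $(a_{t},V_{t})$ and therefore drops out of the logarithmic derivative. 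This step uses the integrability hypothesis on $f^{*}$ to ensure that both sides are finite and that interchanging conditional expectation with the reweighting by $\mu/\PP^{0}(V_{T})$ is legitimate.

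Second, I would apply the Bismut-Elworthy-Li formula (the lemma immediately preceding the corollary) to the numerator $\partial_{a}\EE^{\PP^{0}}[e^{-f^{*}(V_{T})}\mid a_{t},V_{t}]$, taking $f=f^{*}$. This yields
\begin{equation*}
\partial_{a}\EE^{\PP^{0}}[e^{-f^{*}(V_{T})}\mid a_{t},V_{t}] \;=\; \frac{Z}{T-t}\,\EE^{\PP^{0}}\!\left[\tfrac{\mu(V_{T})}{\PP^{0}(V_{T})}\int_{t}^{T}\tfrac{Y_{s}}{\sigma(a_{s})}\,dZ^{0}_{s}\,\Big|\,a_{t},V_{t}\right],
\end{equation*}
with $Y_{s}$ the tangent process defined in the lemma. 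Dividing by $\EE^{\PP^{0}}[e^{-f^{*}(V_{T})}\mid a_{t},V_{t}]$ then causes the constant $Z$ to cancel, leaving exactly the ratio displayed in the statement of the corollary. Multiplying by $\sigma(a_{t})^{2}$ recovers the claimed drift, and the conclusion $\int_{0}^{T}a_{s}^{2}\,ds\overset{\PP}{\sim}\mu$ follows directly from Theorem \ref{thm3}.

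The only delicate step is verifying the applicability of the Bismut-Elworthy-Li formula when the terminal functional depends on $V_{T}=\int_{0}^{T}a_{s}^{2}\,ds$ rather than on $a_{T}$ alone; this requires viewing $(a_{t},V_{t})$ as a jointly Markovian pair and differentiating only in the $a$-variable, with the $V$-coordinate held fixed. Since $V$ has zero diffusion coefficient, the malliavin-type integration by parts acts purely through the $a$-component, and the formula quoted in the lemma is exactly the right statement. The $\mu$-integrability hypothesis on $f^{*}$ then guarantees that $\mu(V_{T})/\PP^{0}(V_{T})$ has finite moments sufficient for differentiating under the expectation and for the stochastic integral to be well defined.
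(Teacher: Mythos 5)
Your proposal is correct and follows precisely the route the paper intends: substitute the optimal potential from $(\ref{eqfvariance})$ into the drift of Theorem~\ref{thm3}, apply the Bismut--Elworthy--Li lemma to the numerator, and observe that the normalizing constant $Z$ cancels in the ratio. The remarks on differentiating only in the $a$-coordinate of the Markov pair $(a_t,V_t)$ are a sensible elaboration but do not change the argument.
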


\begin{exa}[Entropic SABR model coincides with the Bergomi model \cite{ber0}]
\no Let us consider a SABR model defined under a measure $\PP^0$ by
\beaa
{da_t \over a_t}&=& \nu dZ^0_t \eeaa
We want to modify this model such that we match a $T$-variance swap with payoff $\int_0^T a_s^2 ds$.  Following our previous construction with $f$ restricted to be of the form $f(v)=\lambda v$ with $\lambda \in \RR$, our entropic SVM is
\beaa da_t= \nu^2 a_t^2 \partial_a \ln \EE^{\PP^0}[  e^{-\lambda^* V_T} | a_t,V_t] dt + \nu a_t dZ_t \eeaa where  $\lambda^*$ optimizes
\beaa  \sup_{\lambda  \in \RR} -\lambda \EE^\mu[V_T] -\ln \EE^{\PP^0}[  e^{-\lambda V_T} ] \eeaa Then $\EE^\PP[\int_0^T a^2_sds]= \EE^\mu[V_T]$.  Note that the term
$\partial_a \ln \EE^{\PP^0}[  e^{-\lambda V_T} | a_t,V_t]$  depends only of $a_t$ (not on $V_t$) and from (\cite{bri} -- Dothan's model), we have
\bea \EE^{\PP^0}[  e^{-\lambda (V_T-V_t)} | a_t]={\bar{r}^p \over \pi^2}
\int_0^\infty \sin(2 \sqrt{\bar r}\sinh y)\int_0^\infty f(z) \sin(yz) dzdy+{2 \over \Gamma(2p)}\bar{r}^pK_{2p}(2\sqrt{\bar r}) \nonumber \\ \label{dothan} \eea where
\beaa f(z):=\exp[ -{4 \nu^2(4 p^2+z^2)(T-t) \over 8}] z |\Gamma (-p+{i}{z \over 2})|^2 \cosh {\pi z \over 2 }, \\
\bar{r}={2 \lambda^* a^2_t \over 4 \nu^2}, \quad p={1 \over 2}-\nu^2 \eeaa

\no Now if we impose to be calibrated to all variance swaps for all $T \in \RR^+$ (i.e., $\EE^\PP[a_T^2]=
\EE^{\mu_T}[a^2_T])$, the entropic SABR model becomes
\beaa da_t= \nu^2 a_t^2 \partial_a \ln \EE^{\PP^0}[  e^{-\int_t^{\infty} \lambda^*(s) a^2_s ds} | a_t] dt + \nu a_t dZ_t \eeaa  where
$\lambda^*(\cdot)$ is the solution of:
\beaa \sup_{\lambda(\cdot)} \{ -\int_0^\infty  \lambda(s) \EE^{\mu_s}[a^2_s] ds -\ln \EE^{\PP^0}[  e^{-\int_0^\infty  \lambda(s)  a_s^2 ds} ] \} \eeaa
 By assuming that the term-structure of the variance swaps $ s \in \RR_+ \mapsto \mu_s$ is such that $\lambda^*(s)=\lambda^*$ is constant, we obtain that the dynamics is
 \beaa da_t&=& \nu^2 a_t^2 \partial_a \ln \EE^{\PP^0}[  e^{-\lambda^* \int_t^{\infty}  a^2_s ds} | a_t] dt + \nu a_t dZ_t \\
 &=& \nu^2 a_t^2 \partial_a \ln \EE[e^{-\lambda^* a^2_t \gamma}  ] dt + \nu a_t dZ_t    \eeaa with   $\gamma:=\int_0^{\infty}  e^{-\nu^2 s + 2\nu Z_s^0 }ds$. By taking (in principle) the limit $T \rightarrow \infty$ in (\ref{dothan}),  we have that  the random variable $\gamma$ is distributed according to the inverse of a gamma distribution
 \beaa \gamma  \overset{\PP^0}{\sim} {1 \over 2 G({1 \over 2},1)}, \quad G({1 \over 2},1)(dx):=x^{- {1 \over 2}}{e^{-x} \over \sqrt{\pi} }dx \eeaa
 This is  the so-called Matsumoto-Yor formula \cite{mat}. Finally, by integrating over the  gamma distribution, we obtain
 \beaa {da_t \over a_t}=\nu  dZ^0_t -\sqrt{2 \lambda^*} \nu^2 a_t dt \eeaa  which coincides with the Bergomi model \cite{ber0} without a mean-reversion as the variance swap term-structure has been chosen such that \beaa \EE^{\mu_s}[a_s^2]={ \EE^{\PP^0}[ a_s^2 e^{-\lambda^* V_s} e^{-\sqrt{2 \lambda^*} a_s}  ] \over \EE^{\PP^0}[ e^{-\lambda^* V_s} e^{-\sqrt{2 \lambda^*} a_s}  ] } \eeaa
\end{exa}

\subsection{SVM matching options on the running maximum}

 \noindent Here, we want to add a drift term such that we calibrate (at $t=0$)  market prices of call options on the running maximum $M_T:=\max_{s \in [0, T]} S_s$ (i.e., Lookback options), meaning that we impose $ M_T \overset{\PP}{\sim}\mu$.

\begin{thm}\label{thm3bis} Let us consider the strictly concave optimization problem:
\beaa P_1&:=&\sup_{f \in \mathrm{L}^1(\mu), \Delta_\cdot \in {\cal A}} -\EE^\mu[f] -\ln \EE^{\PP^0}[  e^{-f(M_T)-\int_0^T \Delta_s dS_s} ]\eeaa and assume that $P_1<\infty$.  We denote
 $f^*$ and $\Delta_\cdot^*$ the unique solutions. Let us consider the SVM defined under $\PP$ by
\beaa dS_t&=&S_t a_t dW_t\\
da_t&=& \left( b(a_t)+\sigma(a_t)^2 \partial_a \EE^{\PP^0}[  e^{-f^*(M_T)-\int_t^T \Delta^*_s dS_s} |S_t,M_t,a_t] \right) dt + \sigma(a_t) dZ_t \eeaa  Then $M_T \overset{\PP}{\sim} \mu$.
\end{thm}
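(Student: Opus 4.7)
The argument is a path-dependent adaptation of Theorem~\ref{thm2}, with the running maximum $M_T$ playing the role of the point marginal $S_{t_1}$. Under $\PP^0$ the enlarged process $(S_t,M_t,a_t)$ is Markov, which lets us recycle the entropy-duality and Girsanov machinery used to prove Theorem~\ref{thm2}.

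First I would reformulate $P_1$ as the Lagrangian dual of
\[
\inf_{\PP \in \mathcal{M}(\mu)} H(\PP|\PP^0), \qquad \mathcal{M}(\mu):=\{\PP\sim\PP^0 : S \text{ is a } \PP\text{-martingale},\; M_T\overset{\PP}{\sim}\mu\},
\]
with the marginal constraint enforced by a Lagrange multiplier $f\in \mathrm{L}^1(\mu)$ and the martingality of $S$ enforced by $\Delta\in\mathcal{A}$ through the family of constraints $\EE^\PP[\int_0^T \Delta_s dS_s]=0$. The Donsker--Varadhan variational formula $\ln \EE^{\PP^0}[e^\xi]=\sup_\PP(\EE^\PP[\xi]-H(\PP|\PP^0))$ converts the primal into exactly $P_1$; strict convexity of $H$ yields unique optimizers $f^*,\Delta^*$ and a unique minimizer with density $d\PP^*/d\PP^0 = Z^{-1} e^{-f^*(M_T)-\int_0^T \Delta^*_s dS_s}$. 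First-order conditions then deliver the stated conclusion $M_T\overset{\PP^*}{\sim}\mu$: perturbing $f\mapsto f+\eps g$ gives $\EE^\mu[g]=\EE^{\PP^*}[g(M_T)]$ for every test $g$, while perturbing $\Delta\mapsto\Delta+\eps h$ gives $\EE^{\PP^*}[\int_0^T h_s dS_s]=0$ for every adapted $h$, so that $S$ is also a $\PP^*$-martingale.

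To identify the $\PP^*$-dynamics explicitly and recover the SDE in the statement, set $v(t,s,m,a):=\EE^{\PP^0}[e^{-f^*(M_T)-\int_t^T \Delta^*_u dS_u}\mid S_t=s,M_t=m,a_t=a]$ on the domain $\{0\le s\le m\}$, so that the Radon--Nikodym martingale factorises as $D_t=Z^{-1}e^{-\int_0^t\Delta^*_u dS_u}\,v(t,S_t,M_t,a_t)$. Applying Itô to $\ln D_t$ and using that $D_t$ is a $\PP^0$-martingale yields a Burgers-type semi-linear PDE for $u:=-\ln v$ analogous to (\ref{burgersSVM}). Girsanov's theorem then reads off the drift corrections under $\PP^*$: decomposing $Z^0=\rho W^0+\sqrt{1-\rho^2}\,B^0$, the $W^0$-component of the drift change is exactly compensated by $\Delta^*$ (this is the martingality of $S$ under $\PP^*$), while the $B^0$-component adds $(1-\rho^2)\sigma(a_t)^2\,\partial_a\ln v(t,S_t,M_t,a_t)$ to the drift of $a_t$, matching the SDE in the statement.

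The main obstacle will be the singular nature of $M_t$, which increases only on the zero-measure set $\{S_t=M_t\}$: Itô's formula applied to $v(t,S_t,M_t,a_t)$ generates a local-time-like term carried by $\partial_m v$ on the diagonal $\{s=m\}$. The correct way to absorb it is to establish the Neumann boundary condition $\partial_m v(t,s,s,a)=0$, which encodes the fact that $M_t$ cannot decrease; I would prove it by conditioning on the path of $a_\cdot$ and invoking the reflection principle for the conditional Brownian motion driving $\ln S$, thereby reducing the claim to a standard property of Brownian running maxima. Once this boundary condition is in hand, the PDE derivation and the Girsanov identification proceed exactly as in Theorem~\ref{thm2}, completing the proof.
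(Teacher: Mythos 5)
Your proposal is correct and follows the same stochastic-control/HJB route that the paper itself sketches in the ``Simplification and HJB equation'' subsection immediately after Theorem \ref{thm3bis} (the paper gives no formal proof there), including the key Neumann condition $\partial_m u(t,m,m,a,\pi)=0$ on the diagonal $\{s=m\}$.

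Two small points. First, the drift you derive, $(1-\rho^2)\sigma(a_t)^2\,\partial_a\ln v(t,S_t,M_t,a_t)$, is the \emph{correct} form, consistent with Theorem \ref{thm2}; the statement of Theorem \ref{thm3bis} as printed omits both the factor $(1-\rho^2)$ and the $\ln$, which is almost certainly a typo. So your derivation in fact \emph{corrects} the statement rather than ``matching the SDE in the statement'' as you claim. Second, for the Neumann boundary condition you propose to condition on the path of $a_\cdot$ and invoke the reflection principle; this is more machinery than needed. Since $M_T=\max\bigl(M_t,\max_{t\le u\le T}S_u\bigr)$ and $\max_{t\le u\le T}S_u\ge S_t$ almost surely, the $m$-dependence of $v(t,s,m,a)$ vanishes as $s\uparrow m$, so $\partial_m v(t,m,m,a)=0$ follows from continuity of paths alone, and this argument carries through unchanged when the correction $\int_t^T\Delta^*_s\,dS_s$ sits in the exponent.
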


\subsubsection{Simplification and HJB equation}
\noindent We have
\beaa \inf_{\Delta_\cdot \in {\cal A}}  \EE^{\PP^0}[  e^{-f(M_T)-\int_0^T \Delta_s dS_s} ]:=u(0,S_0,M_0:=S_0,\pi_0:=0) \eeaa \no where $u(t,s,m,a,\pi)$ is the solution of the HJB:
\beaa \partial_t u + {\cal L}^0u +\inf_{\Delta \in \RR} \left( {1 \over 2} a^2 s^2\partial_\pi^2 u \Delta^2+ \Delta  a^2 s^2 \partial_{s \pi} u+ \rho \Delta \sigma(a) a s \partial_{a \pi} u \right) =0, \quad u(T,s,m,a,\pi)=e^{-f(m)-\pi} \eeaa  with the Neumann condition $\partial_m u(t,m,m,a,\pi)=0$. The solution is $u(t,s,m,a,\pi)=e^{-\pi} U(t,s,m,a,\pi)$ where
\beaa \partial_t U + {\cal L}^0U -{1 \over 2} { \left( a s \partial_{s} U+ \rho \sigma(a)  \partial_{a }U \right)^2 \over U}  =0, \label{BurgersM} \quad U(T,s,m,a)=e^{-f(m)} \eeaa

\subsection{SVM matching Vanillas and VIX options}

\no VIX futures and VIX options, traded on the CBOE, have become popular volatility derivatives. The payoff of a VIX index
at  a future expiry $t_1$ is by definition the price at $t_1$ of the $30$ day log-contract which pays $-{ 2 \over t_2-t_1} \ln {S_{2} \over S_{1}}$ at $t_2=t_1+30$ days: \bea
\mathrm{VIX}_{t_1}^2&\equiv& -{2 \over \Delta } \EE^{\PP^\mathrm{mkt}}_{t_1}\left[\ln
\left(S_{2} \over S_{1}\right)\right] \label{defvix}, \quad \Delta=t_2-t_1 \eea
This definition is at first sight strange as $\mathrm{VIX}_{t_1}$  seems to depend on the probability measure $\PP^\mathrm{mkt}$ (i.e., pricing model) used to value the log-contract at $t_1$. A choice should therefore be made and the probability measure $\PP^\mathrm{mkt}$ selected should be included in the term sheet which describes the payoff to the client. In fact, this conclusion is not correct and the value $\mathrm{VIX}_{t_1}$ is independent of the choice of $\PP^\mathrm{mkt}$ (i.e., model-independence) as it can be replicated at $t_1$ with $t_2$-Vanillas.
\no The payoff of a call option on VIX expiring at $t_1$ with strike $K$ is $\left(\mathrm{VIX}_{t_1}-K\right)^+$.   We want to construct  a SVM calibrated to $t_1$ and $t_2$-Vanillas but also to call options on VIX expiring at $t_1$. We denote $\mu_{\mathrm{VIX}}$ the marginal distribution of  $\mathrm{VIX}_{t_1}^2$ implied from the market and set $C_\mathrm{VIX}(K):=\EE^{\mu_{\mathrm{VIX}}}[\left(\mathrm{VIX}^2_{t_1}-K\right)^+]$ for all $K \in \RR^+$.

\begin{thm}[Vanillas and VIX option]\label{thmvix} Let us consider the strictly concave optimization problem:
\beaa P_3&:=&\sup_{ f_i \in \mathrm{L}^1(\mu_i), (\Delta_s)_{s \in (0,t_2)} \in {\cal A}, \Delta_\mathrm{VIX} \in C^0(\RR_+^3),  f_{\mathrm{VIX}} }\{-\sum_{i=1}^2
\EE^{ \mu_i}[f_i] - \EE^{\mu_{\mathrm{VIX}}}[ f_{\mathrm{VIX}}] \\
&&-\ln \EE^{\PP^0}[ e^{-\sum_{i=1}^2  f_i(S_{t_i})- f_\mathrm{VIX}(X^2)
-\int_{0}^{t_2} \Delta_s dS_s-\Delta_\mathrm{VIX}(S_{t_1},a_{t_1},X)\left( {2 \over t_2-t_1}\ln {S_{t_2} \over S_{t_1}}+X^2 \right)
   }
]   \} \eeaa and assume that $P_3<\infty$. We denote $f_1^*,f_2^*,f_\mathrm{Vix}^*$ and $\Delta_\cdot^*, \Delta_\mathrm{VIX}^*$  the unique solutions. Let us consider the SVM defined under $\PP$ by
\beaa dS_t&=&S_t a_t dW_t \\
da_t&=&\sigma(a_t)^2 \partial_a \EE_t^{\PP^0}[  e^{-  f_2(S_{t_2})-f_1(S_{t_1})1_{t <t_1}- f_\mathrm{VIX}(X^2)
-\int_{t}^{t_2} \Delta_s dS_s-\Delta_\mathrm{VIX}(S_{t_1},a_{t_1})\left( {2 \over t_2-t_1}\ln {S_{t_2} \over S_{t_1}}+X^2 \right)
} ] dt \\&+& b(a_t)dt+\sigma(a_t) dZ_t \eeaa Then $S_{t_i} \overset{\PP}{\sim} \mu_i,\quad i=1,2$ and $
\mathrm{VIX}_{t_1}^2 \overset{\PP}{\sim} \mu_\mathrm{VIX}$.
\end{thm}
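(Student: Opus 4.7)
The plan is to mimic the entropy-penalization / Lagrangian-duality arguments used for Theorems~\ref{thm2} and~\ref{thm2bis}; the genuinely new feature is the VIX constraint, which, unlike the marginal and martingality constraints, is intrinsically \emph{nonlinear} in $\PP$ because $\mathrm{VIX}_{t_1}^2$ is itself defined as a conditional expectation under $\PP$.

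First I would set up the primal problem
\begin{equation*}
P_3 = \inf_{\PP \in \mathcal{M}} H(\PP|\PP^0),
\end{equation*}
where $\mathcal{M}$ collects all $\PP \sim \PP^0$ under which $S$ is a martingale on $[0,t_2]$, $S_{t_i} \overset{\PP}{\sim} \mu_i$ for $i=1,2$, and $\mathrm{VIX}_{t_1}^2 \overset{\PP}{\sim} \mu_{\mathrm{VIX}}$. The martingality and marginal constraints are linear in $\PP$ and are dualized, as in Theorem~\ref{thm2bis}, by multipliers $\Delta_s \in \mathcal{A}$ and $f_1,f_2$. To linearize the VIX constraint I would enlarge the probability space with an auxiliary $\mathcal{F}_{t_1}$-measurable random variable $X$ and recast it as the pair
\begin{equation*}
X^2 \overset{\PP}{\sim} \mu_{\mathrm{VIX}}, \qquad \EE^{\PP}\!\left[\tfrac{2}{t_2-t_1}\ln(S_{t_2}/S_{t_1}) + X^2 \,\big|\, \mathcal{F}_{t_1}\right] = 0,
\end{equation*}
both now linear in $\PP$. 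The first is dualized by $f_{\mathrm{VIX}}$; the second, being equivalent to the family of identities
\begin{equation*}
\EE^{\PP}\!\left[\Delta_{\mathrm{VIX}}(S_{t_1},a_{t_1},X)\left(\tfrac{2}{t_2-t_1}\ln(S_{t_2}/S_{t_1}) + X^2\right)\right] = 0,\qquad \Delta_{\mathrm{VIX}} \in C^0(\RR_+^3),
\end{equation*}
is dualized by $\Delta_{\mathrm{VIX}}$.

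Applying the Gibbs variational identity $\inf_{\PP\sim\PP^0}\{H(\PP|\PP^0)-\EE^{\PP}[Y]\} = -\ln \EE^{\PP^0}[e^Y]$ with $Y$ equal to the sum of the five Lagrange terms just introduced yields exactly the dual problem stated in the theorem, and identifies the optimal density
\begin{equation*}
\frac{d\PP^*}{d\PP^0} \propto \exp\!\left[-\sum_{i=1}^2 f_i^*(S_{t_i}) - f_{\mathrm{VIX}}^*(X^2) - \int_0^{t_2}\Delta_s^* dS_s - \Delta_{\mathrm{VIX}}^*(S_{t_1},a_{t_1},X)\left(\tfrac{2}{t_2-t_1}\ln(S_{t_2}/S_{t_1}) + X^2\right)\right].
\end{equation*}
Strict concavity of the dual together with $P_3<\infty$ gives uniqueness of the optimizers via Csiszar's projection theorem. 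The KKT first-order conditions in $f_1,f_2,f_{\mathrm{VIX}},\Delta_s,\Delta_{\mathrm{VIX}}$ then recover the four defining constraints of $\mathcal{M}$ under $\PP^*$; in particular, combining $X^2 \overset{\PP^*}{\sim} \mu_{\mathrm{VIX}}$ with the conditional identity forces $\mathrm{VIX}_{t_1}^2 = X^2$ almost surely, hence $\mathrm{VIX}_{t_1}^2 \overset{\PP^*}{\sim} \mu_{\mathrm{VIX}}$.

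Finally I would identify the SDE in the statement with $\PP^*$. Setting $M_t := \EE_t^{\PP^0}[d\PP^*/d\PP^0]$ and applying It\^o together with Girsanov in parallel to the proof of Theorem~\ref{thm2}: the $\Delta_s^*$ multiplier absorbs the $W$-component so that $S$ remains a $\PP^*$-martingale with zero drift, while the residual drift on $a_t$ takes the form $(1-\rho^2)\sigma(a_t)^2 \partial_a \ln \EE_t^{\PP^0}[\cdots]$; the indicator $1_{t<t_1}$ attached to $f_1^*$ simply reflects that for $t \geq t_1$ the term $f_1^*(S_{t_1})$ is already $\mathcal{F}_t$-measurable and contributes no drift. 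The main obstacle is precisely this VIX linearization via the auxiliary variable $X$ and the multiplier $\Delta_{\mathrm{VIX}}$; once it is in place, the duality and Girsanov identification proceed in complete parallel to Theorems~\ref{thm2} and~\ref{thm2bis}.
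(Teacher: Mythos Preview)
Your proposal is correct and follows essentially the same route as the paper: set up the primal $P_3=\inf_{\PP\in\mathcal{M}}H(\PP|\PP^0)$, dualize each constraint by a Lagrange multiplier (the $f_i$, $\Delta_s$, $f_{\mathrm{VIX}}$, $\Delta_{\mathrm{VIX}}$), apply the Gibbs variational identity to obtain the stated dual, and then identify the optimal $\PP^*$ with the SDE via Girsanov exactly as in Theorem~\ref{thm2}. The paper's own proof is in fact terser than yours---it simply writes down the Lagrangian and says ``following closely the proof of Theorem~\ref{thm2}''---whereas you spell out the key new idea, namely the linearization of the nonlinear VIX constraint by introducing the auxiliary $\mathcal{F}_{t_1}$-measurable variable $X$ and the replication multiplier $\Delta_{\mathrm{VIX}}$; this is exactly what the paper does implicitly but does not explain.
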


\section{Link with   Dyson Brownian motions and random matrices}

\no The ordered eigenvalues $X^1_t \leq X^2_t  \leq \cdots \leq  X^n_t$
of a Brownian motion in the space of $n \times n$ real Hermitian matrices form a diffusion process which satisfies the SDE:
\beaa dX_t^i=\sum_{1 \leq j \leq n\;:\; j\neq i} {dt \over X_t^i-X_t^j} + dW_t^i \eeaa
where $(W_t^i)_{1 \leq i\leq n}$ are $n$ independent real Brownian motions. This  result  goes back to Dyson
\cite{dys} and corresponds to non-colliding Brownian motions.  The solution of the associated Fokker-Planck equation can be explicitly solved as done by  Johansson in  \cite{joh}:
\beaa p_\mathrm{Dyson}(t,x|y)={1 \over (2 \pi t)^{n \over 2}} {\Delta_n(x) \over \Delta_n(y)} \det [ e^{-{(x_i-y_j)^2 \over 2 t}}]_{1 \leq i,j \leq n} \eeaa with $\Delta_n(x):=\prod_{i<j}^n |x_i-x_j|$.   We would like to reproduce this result by interpreting the Dyson SDE as a Schr\"{o}dinger bridge and then plans to obtain the joint probability density $p_\mathrm{Dyson}$ using the Schr\"{o}dinger factorization property. In this purpose, we consider the optimization problem:
\beaa P_\mathrm{nc}:=\inf_{\PP \in {\cal M}_{nc} } H(\PP|\PP^0)  \eeaa over the convex space ${\cal M}_\mathrm{nc}$ of non-colliding measures \beaa
{\cal M}_\mathrm{nc}:=\{ \PP \sim \PP^0 \; : \; X^1_t \leq X^2_t  \leq \cdots \leq  X^n_t, \quad \PP-\mathrm{as}. \} \eeaa where $\PP^0$ is the $n$-dimensional Wiener measure. For use below, we introduce the  function ${\cal H}(t,x)=1-1_{x^n \geq x^{n-1}  \geq \cdots \geq x^1}$. Note that for all $\PP \in {\cal M}_\mathrm{nc}$, we have $\EE^\PP[{\cal H}(t,X_t)]=0$ where the subscript ``$ \mathrm{nc}$'' means ``non-colliding''.  From the previous section, ${\cal H}$ can be interpreted as a payoff for $n$ underlyings with zero market price.  We have:

\begin{thm} Let us consider the SDE under a measure $\PP$:
\beaa dX_t=\nabla_x \ln \EE^{\PP^0}[ e^{-\int_t^\infty ds \lambda(s) {\cal H}(s,W^0_s)}|X_t]+dW_t \eeaa where $\lambda(\cdot)$ is the solution of the concave optimization:
\beaa \sup_{\lambda(\cdot)} -\ln \EE^{\PP^0}[ e^{-\int_0^\infty ds \lambda(s) {\cal H}(s,W^0_s)}|X_0] \eeaa Then $\PP \in {\cal M}_\mathrm{nc}$.
\end{thm}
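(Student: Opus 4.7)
The plan is to follow exactly the Schrödinger-bridge recipe used throughout the paper, now with the uncountable family of linear constraints $\EE^\PP[{\cal H}(s,X_s)]=0$ indexed by $s\ge 0$. First I would cast the problem as the constrained entropy minimization $P_\mathrm{nc}=\inf\{H(\PP|\PP^0):\PP\sim\PP^0,\ \EE^\PP[{\cal H}(s,X_s)]=0\ \forall s\ge 0\}$. Introducing a multiplier $\lambda(s)$ for each constraint and using the standard duality identity $H(\PP|\PP^0)=\sup_F(\EE^\PP[F]-\ln\EE^{\PP^0}[e^F])$, strong duality would convert $P_\mathrm{nc}$ into the concave dual over $\lambda(\cdot)$ stated in the theorem, with optimal primal Gibbs measure $d\PP^*/d\PP^0\propto \exp(-\int_0^\infty\lambda^*(s){\cal H}(s,W^0_s)ds)$.

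Next I would read off the drift of $X_t$ under $\PP^*$ by a Doob $h$-transform. Setting $\Psi(t,x):=\EE^{\PP^0}[e^{-\int_t^\infty\lambda^*(s){\cal H}(s,W^0_s)ds}|W^0_t=x]$, the restricted density $d\PP^*/d\PP^0|_{\FF_t}$ is a positive exponential martingale proportional to $\Psi(t,X_t)\exp(-\int_0^t\lambda^*(s){\cal H}(s,X_s)ds)$; applying Itô to $\ln\Psi$ and Girsanov produces exactly the drift $\nabla_x\ln\Psi(t,X_t)$ appearing in the statement. This is the $n$-dimensional counterpart of Proposition \ref{thm1} and of the factorization (\ref{fact}).

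Finally I would verify the non-colliding property. The first-order optimality condition obtained by differentiating the concave dual with respect to $\lambda(s)$ reads $\EE^{\PP^*}[{\cal H}(s,X_s)]=0$ for almost every $s\ge 0$, and then for every $s$ by right-continuity of $t\mapsto X_t$. Because ${\cal H}\ge 0$ pointwise by construction, this forces ${\cal H}(s,X_s)=0$ $\PP^*$-a.s.\ for all $s$, i.e.\ $X^1_s\le X^2_s\le\cdots\le X^n_s$, so $\PP^*\in{\cal M}_\mathrm{nc}$.

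The main obstacle I expect is analytic rather than algebraic. The family of constraints is continuously indexed by $s\in\RR_+$, so justifying strong duality and producing a measurable Lagrange density $\lambda^*(\cdot)$ requires work; a natural route is to discretize $s$, apply the finite multi-marginal construction of Section 2 (iterating Corollary \ref{Conditioned SDE with two marginals: again}), and pass to the limit. A further delicacy is integrability: since ${\cal H}$ is an indicator, $\lambda^*(s)$ may blow up to enforce the hard constraint, so one should interpret $e^{-\int\lambda^*{\cal H}}$ as the limit of a penalised sequence. In the limit $\PP^*$ coincides with the normalised restriction of the Wiener measure to the Weyl chamber, and the Karlin--McGregor/Johansson formula for $p_\mathrm{Dyson}$ then drops out of the Schrödinger factorization (\ref{fact}) with $\Psi$ given by the harmonic function $\Delta_n$ on the chamber.
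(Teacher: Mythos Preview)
Your proposal is correct and follows precisely the paper's own approach: the paper does not supply a separate proof for this theorem but treats it as an instance of the entropic/Schr\"odinger-bridge template of Proposition~\ref{thm1} (dualize the constraints $\EE^\PP[{\cal H}(s,X_s)]=0$ with multipliers $\lambda(s)$, identify the drift via the Doob $h$-transform, and read off $\PP\in{\cal M}_\mathrm{nc}$ from first-order optimality). You even anticipate the one point the paper makes explicit just after the statement, namely that the hard indicator constraint forces $\lambda^*(\cdot)=\infty$, so that $e^{-\int \lambda^*{\cal H}}$ must be interpreted as the limit $1_{\cal W}$ and the resulting measure is the Wiener measure conditioned on the Weyl chamber.
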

\no The supremum is attained for $\lambda^*(\cdot)=\infty$ as the functional $-\ln \EE^{\PP^0}[ e^{-\int_0^\infty ds \lambda(s) {\cal H}(s,W^0_s)}|X_0] $ is increasing in $\lambda(\cdot)$ and we get the SDE
\beaa dX_t=\nabla_x \ln \EE^{\PP^0}[  1_{\cal W} |X_t]+dW_t \eeaa where  ${\cal W}:=\{X^1_t \leq X^2_t  \leq \cdots \leq  X^n_t, \quad \forall t \}$.
By computing  $\EE^{\PP^0}[1_{\cal W} |X_t]$ in closed-form using the Karlin-McGregor formula \cite{karlin}, we reproduce the Dyson SDE and this implies:
\begin{cor} The unique solution of $P_\mathrm{nc}$ is the Dyson SDE.
\end{cor}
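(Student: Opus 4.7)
The plan is to unpack the closed-form computation sketched just above the corollary statement: first write the expectation using the Karlin--McGregor non-intersection formula, then extract its large-time asymptotics, differentiate, and read off the Dyson drift. Existence and uniqueness of the optimizer of $P_\mathrm{nc}$ already follow from Csisz\'ar's projection theorem (since $\mathcal{M}_\mathrm{nc}$ is convex and $\PP^0$-closed for the topology where $H(\cdot|\PP^0)$ is lower semi-continuous), together with the previous Theorem which identifies the optimizer as the SDE with drift $\nabla_x \ln u(t,x)$, where $u(t,x)$ denotes the (regularized) limit of $\EE^{\PP^0}[\mathbf{1}_{\{\text{no collision on }[t,T]\}}\,|\,X_t=x]$ as $T\to\infty$; so the whole content of the corollary lies in computing $\nabla_x \ln u$.

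Next I would invoke the Karlin--McGregor formula: for $x$ in the Weyl chamber $W=\{x_1<\cdots<x_n\}$ and $s=T-t>0$,
\begin{equation*}
\PP^0(\text{no collision on }[t,T]\mid X_t=x)=\int_W \det\bigl[p_s(x_i,y_j)\bigr]_{1\le i,j\le n}\,dy,
\end{equation*}
with $p_s(x,y)=(2\pi s)^{-1/2}e^{-(y-x)^2/(2s)}$. Factoring the row contributions $e^{-x_i^2/(2s)}$ and the column contributions $(2\pi s)^{-1/2}e^{-y_j^2/(2s)}$ out of the determinant, the core object becomes $\det[e^{x_i y_j/s}]$.

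The main technical step is the large-$s$ asymptotic of this determinant. Expanding each entry as $\sum_{k\ge 0}(x_iy_j)^k/(k!\,s^k)$ and using that a Leibniz-type sum vanishes unless the chosen row powers $k_1,\dots,k_n$ are distinct, the dominant contribution comes from $\{k_i\}=\{0,1,\dots,n-1\}$ with total weight $s^{-n(n-1)/2}$, yielding
\begin{equation*}
\det[e^{x_iy_j/s}] = \frac{\Delta_n(x)\,\Delta_n(y)}{\prod_{k=0}^{n-1}k!\;s^{n(n-1)/2}}\bigl(1+O(1/s)\bigr),
\end{equation*}
where $\Delta_n$ is the Vandermonde determinant. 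Rescaling $y=\sqrt{s}\,z$ in the integral over $W$ then gives
\begin{equation*}
\int_W\det\bigl[p_s(x_i,y_j)\bigr]\,dy = C_n\,\Delta_n(x)\,s^{-n(n-1)/4}\bigl(1+o(1)\bigr),\qquad s\to\infty,
\end{equation*}
for an explicit constant $C_n$ (the Gaussian--Vandermonde Selberg-type integral $\int_W e^{-|z|^2/2}\Delta_n(z)\,dz$).

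Taking the logarithm and differentiating in $x$, the $s$-dependent prefactor disappears and $\nabla_x \ln u(t,x)=\nabla_x \ln \Delta_n(x)$. Component by component,
\begin{equation*}
\bigl(\nabla_x\ln\Delta_n(x)\bigr)_i=\partial_{x_i}\sum_{k<\ell}\ln(x_\ell-x_k)=\sum_{j\ne i}\frac{1}{x_i-x_j},
\end{equation*}
which plugged into the SDE of the Theorem is exactly the Dyson SDE. The main obstacle is the justification of the asymptotic expansion and the exchange of limits (the $T\to\infty$ limit together with the $\lambda\to\infty$ limit, which a priori produces a $0/0$ form since $\PP^0(\mathcal{W})=0$); this is handled by working with finite $T$, noting that the normalization cancels in $\nabla_x\ln u$, and checking that the $O(1/s)$ correction in the determinantal expansion contributes only $O(1/s)$ to $\nabla_x\ln u$, hence vanishes in the limit.
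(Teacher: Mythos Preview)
Your proposal is correct and follows essentially the same route as the paper: the paper's argument for this corollary is the single sentence ``By computing $\EE^{\PP^0}[1_{\cal W} |X_t]$ in closed-form using the Karlin--McGregor formula, we reproduce the Dyson SDE,'' and what you have written is precisely the unpacking of that sentence, including the factorization of the Gaussian determinant, the leading-order Vandermonde asymptotics $\det[e^{x_iy_j/s}]\sim \Delta_n(x)\Delta_n(y)/(\prod_k k!\,s^{n(n-1)/2})$, and the observation that the $x$-independent prefactor drops out of $\nabla_x\ln u$. Your explicit flagging of the $0/0$ issue (since $\PP^0({\cal W})=0$ on an infinite horizon) and its resolution via finite-$T$ normalization is a detail the paper leaves implicit.
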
 \no This result is in line with the well-known statement that the Gaussian distribution is the minimal entropy density with fixed mean and variance (Maxwellian distribution). The Dyson Brownian motion is also the  minimal entropy diffusion if we restrict the particles to be non-colliding.
\no This implies the following density factorization of the joint probability density:
\begin{cor} $p_\mathrm{Dyson}$ can be factorized as
\beaa p_\mathrm{Dyson}(t,x)=\Psi(t,x) \bar{\Psi}(t,x) \eeaa
\end{cor}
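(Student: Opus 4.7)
The plan is to apply the density factorization / Doob $\Psi$-transform of Section~2 to the Dyson diffusion, which by the preceding corollary is the unique solution of $P_\mathrm{nc}$. The candidate factorization is read off directly from Johansson's explicit formula:
\beaa \Psi(t,x):=\Delta_n(x),\qquad \bar{\Psi}(t,x):=\frac{1}{(2\pi t)^{n/2}\,\Delta_n(X_0)}\det\bigl[e^{-(x_i-X_0^j)^2/2t}\bigr]_{1\le i,j\le n}, \eeaa
so that the identity $p_\mathrm{Dyson}(t,x|X_0)=\Psi(t,x)\bar{\Psi}(t,x)$ is essentially tautological. The real content is to check that this $(\Psi,\bar{\Psi})$ solves the backward/forward PDE pair of the factorization~(\ref{fact}), and to reconcile the choice $\Psi=\Delta_n$ with the generic Schr\"odinger potential appearing in the preceding theorem.

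First I would verify the backward equation for $\Psi$. Since the Vandermonde is harmonic on the interior of the Weyl chamber and independent of time, $\partial_t\Psi+\tfrac12\sum_i\partial_{x_i}^2\Psi=0$ holds with vanishing Dirichlet data on the walls; this is the analog of the Hamilton--Jacobi--Bellman role played by $e^{-f^*}$ in Proposition~\ref{thm1}. Next I would verify the forward equation for $\bar{\Psi}$: expanding the determinant as a signed sum over permutations produces a linear combination of products of one-dimensional Gaussian kernels, each of which solves the free heat equation, so $\partial_t\bar{\Psi}=\tfrac12\sum_i\partial_{x_i}^2\bar{\Psi}$ on the interior of the Weyl chamber; a direct asymptotic check as $t\to 0^+$ gives $\lim_{t\to 0}\Psi\bar{\Psi}=\delta_{X_0}$, matching the initial condition of~(\ref{fact}). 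As a sanity check, the Doob drift $\nabla_x\ln\Psi(t,X_t)$ has components $\sum_{j\ne i}(X_t^i-X_t^j)^{-1}$, which is precisely the Dyson drift, confirming that the $\Psi\bar{\Psi}$-transform of $\PP^0$ reproduces the Dyson measure.

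The main obstacle is the singular nature of the Schr\"odinger potential in the Dyson case: the factorization of Section~2 was established under $P_1<\infty$ with $f^*\in L^1$, whereas here $\lambda^*=+\infty$ and the literal conditional expectation $\EE^{\PP^0}[1_{{\cal W}}|W^0_t=x]$ vanishes pointwise inside the Weyl chamber. The correct identification $\Psi\propto\Delta_n$ should be obtained by applying the Karlin--McGregor formula \cite{karlin} to the survival probability up to a large horizon $T$ and passing to the renormalized limit $T\to\infty$, which selects the Vandermonde as the minimal positive space-time harmonic function on the Weyl chamber. I would bypass this subtlety in the proof itself by taking the displayed $(\Psi,\bar{\Psi})$ as a \emph{definition}, verifying the PDEs and boundary/initial data by hand as above, and then invoking uniqueness of the Fokker--Planck equation associated with the Dyson SDE to conclude that $\Psi\bar{\Psi}$ must coincide with $p_\mathrm{Dyson}$.
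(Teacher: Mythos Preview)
Your proposal is correct and in fact goes well beyond what the paper does. The paper offers no proof of this corollary at all: it simply writes ``This implies the following density factorization'' immediately after establishing that the Dyson SDE is the Schr\"odinger bridge for $P_\mathrm{nc}$, and then states the corollary bare. The implicit argument is just: Schr\"odinger bridges factorize (Section~2.2), Dyson is a Schr\"odinger bridge (preceding corollary), done.

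Your approach is the same in spirit but substantially more informative in execution. You actually \emph{identify} the pair $(\Psi,\bar\Psi)=(\Delta_n,\cdot)$ by reading it off Johansson's formula, verify the backward/forward heat equations directly (harmonicity of the Vandermonde, Karlin--McGregor determinant as a sum of Gaussian products), and check that the Doob drift $\nabla\ln\Delta_n$ reproduces Dyson's drift. This is exactly the content one would want in a complete proof, and the paper supplies none of it. You also correctly flag the genuine technical gap the paper glosses over: the factorization of Section~2.2 was derived under $P_1<\infty$ with an $L^1$ potential, whereas here $\lambda^*=+\infty$ and the na\"ive conditional expectation $\EE^{\PP^0}[1_{\cal W}\mid W^0_t=x]$ is zero. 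Your proposed resolution --- take $(\Psi,\bar\Psi)$ as a definition, verify the PDE pair and initial data by hand, then invoke Fokker--Planck uniqueness --- is the clean way to close this, and is more rigorous than anything the paper offers.
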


\begin{exa}[Brownian excursion and meander]  We consider here a Brownian motion  starting at $X_0:=0$ which is constrained to stay positive between $[0,T]$ and such that $X_T \overset{\PP}{\sim} \mu$. This can be seen as a variant of a Brownian excursion and meander. A Brownian excursion  is a  Brownian motion (i.e., $X_0:=X_T:=0$) which is constrained to stay positive between $[0,T]$. A Brownian meander is a Brownian motion which starts at $X_0:=0$ and which is constrained to stay positive between $[0,T]$. Following the same construction as above, we obtain that
\beaa dX_t=\nabla_x \ln \EE^{\PP^0}[  1_{\cal A} e^{-f^*(X_T)}|X_t]dt+dW_t \eeaa where  ${\cal A}:=\{X_t \geq 0, \quad \forall t \in [0,T] \}$ and $f^*$ solution of
\beaa \sup_{f \in \L^1(\mu)} -\EE^{\mu}[f]-\ln \EE^{\PP^0}[  1_{\cal A} e^{-f(X_T)}] \eeaa Using the method of images, this can be simplified into
\beaa dX_t=\nabla_x \ln \int \mu(dy) {p_{\cal A}(T,y|t,X_t) \over p_{\cal A}(T,y|0,X_0:=0)}+dW_t \eeaa where
$p_{\cal A}(T,y|t,x)={1 \over \sqrt{2 \pi(T-t)}}\left( e^{-(y-x)^2 \over 2(T-t)}-e^{-{(y+x)^2 \over 2(T-t)}} \right)$.
\end{exa}

\appendix
\section*{Some proofs}

\begin{proof}[Proof of Proposition \ref{thm1}]
\no Let us consider the following stochastic control problem:
\bea P_1:=\inf_{\PP \in {\cal M}(\mu)} H(\PP|\PP^0)  \label{primal} \eea
where ${\cal M}(\mu):=\{ \PP \sim  \PP^0 \;: \; X_{T} \overset{\PP}{\sim} \mu \}$ and $H(\PP|\PP^0):=\EE^{\PP}[ \ln {d\PP \over d\PP^0} ]$ is the relative entropy with respect to a prior $\PP^0$ chosen to be the Brownian measure. From the Girsanov theorem, for all $\PP \sim \PP^0$, we have the following dynamics under $\PP$:
\beaa dX_t=\lambda_t dt + dW_t \eeaa where $W_t$ is a $\PP$-Brownian motion and
\beaa {d\PP^0 \over d\PP}|_{{\cal F}_T}=e^{-\int_0^{T} \lambda_s dW_s -{1 \over 2 } \int_0^{T} \lambda_s^2 ds} \eeaa
\no Under $\PP^0$, $X_t$ is a $\PP^0$-Brownian motion. We have therefore
 \beaa H(\PP|\PP^0)&=&\EE^{\PP}[
\left( \int_0^{T} \lambda_s dW_s +{1 \over 2 } \int_0^{T} \lambda_s^2 ds \right)]={1 \over 2 }\EE^{\PP}[ \int_0^{T} \lambda_s^2 ds]  \eeaa  By convex duality, the primal problem (\ref{primal}) can be converted into the unconstrained optimization:
\beaa
P_1=\sup_{f \in \mathrm{L}^1(\mu)} \inf_{\lambda_\cdot \in {\cal A} } {1 \over 2 }\EE^{\PP}[ \int_0^{T} \lambda_s^2 ds+f(X_{T})]-\EE^{\mu}[f]  \eeaa
This is equivalent to
\beaa P_1=\sup_{f \in \mathrm{L}^1(\mu)}\{ u(0,X_0)-\EE^{\mu}[f]  \} \eeaa where $u(t,x):=\inf_{\lambda_\cdot \in {\cal A} } {1 \over 2 }\EE^{\PP}[ \int_t^{T} \lambda_s^2 ds+f(X_{T})|X_t=x]$ is the solution of the Hamilton-Jacobi-Bellman PDE:
\beaa \partial_t u(t,x) + {1 \over 2}\partial_x^2 u  +\inf_{\lambda} \{ {1 \over 2 }\lambda^2
+\lambda \partial_x u\}=0, \quad u(T,x):=f(x) \eeaa
\no Taking the infimum over $\lambda$, we get the Burgers PDE:
\beaa \partial_t u+ {1 \over 2}\partial_x^2 u  - {1 \over 2} (\partial_x u)^2=0, \quad u(T,x)=f(x)  \eeaa
where the optimal control is $\lambda^*_t=-\partial_x u(t,X_t)$. By using a Cole-Hopf transformation $u:=-\ln U$, this PDE can be transformed into the heat kernel:
\beaa \partial_t U + {1 \over 2}\partial_x^2 U=0, \quad U(T,x)=e^{-f(x)}  \eeaa for which we deduce the solution:
\beaa U(t,x)=\EE[ e^{-f(W_T^0)}|W_t^0=x] \eeaa   Finally, our primal reads
\beaa P_1:=\sup_{f \in L(\mu)}\{ -\EE^{\mu}[f] -\ln \EE[ e^{-f(W_T^0)}|W_0^0=X_0]  \} \eeaa and this concludes the proof with our expression of the optimal control $\lambda^*_t$.  This (modern) version of the proof of the Schr\"{o}dinger result is due to F\"{o}llmer \cite{fol}.
\end{proof}

\begin{proof}[Proof of Proposition \ref{thm1bis}]\no Let us consider the following stochastic control problem:
\bea P_2:=\inf_{\PP \in {\cal M}(\mu_1,\mu_2)} H(\PP|\PP^0)  \label{primalbis} \eea By convex duality, $P_2$ can be converted into
\beaa
P_2=\sup_{f_1 \in \mathrm{L}^1(\mu_1),f_2 \in \mathrm{L}^1(\mu_2)} \inf_{\lambda_\cdot  \in {\cal A} } {1 \over 2 }\EE^{\PP}[ \int_0^T \lambda_s^2 ds+f_1(X_{t_1})+f_2(X_{t_2})]-\EE^{\mu_1}[f_1] -\EE^{\mu_2}[f_2] \eeaa The proof is then identical to the proof of Proposition \ref{thm1}.
\end{proof}

\begin{proof}[Proof of Proposition \ref{thm1bisbis}] With the entropic construction, the optimal drift is a function of the time $t$ and $X_t$: $dX_t=\lambda^*(t,X_t) dt +dW_t$. Below, we will show that this drift is completely fixed if we prescribe the $t$-marginals of $X_t$ to be $\mu_t$ for all $t>0$.  By applying It\^o-Tanaka on the convex payoff $(X_t-K)^+$, we get
\beaa d(X_t-K)^+=1_{X_t>K} dX_t +{1 \over 2} \delta(X_t-K) dt \eeaa where $dL_t^K:={1 \over 2} \delta(X_t-K) dt$ is interpreted as the local time at $K$. By taking the expectation on both sides, this gives
\beaa \partial_t  \EE[(X_t-K)^+]=\EE[1_{X_t>K} \lambda(t,X_t)] +{1 \over 2} \EE[\delta(X_t-K)] \eeaa By differentiating w.r.t. $K$, we obtain
\beaa \partial_t  F(t,K)&=&-\lambda(t,K) \partial_K F(t,K) +{1 \over 2} \partial^2_K  F(t,K) \eeaa
 where $F(t,K):=\EE[1_{X_t<K}]$. Using our formula for $\lambda$ in Proposition \ref{thm1bisbis}, we have
\beaa \partial_t  F(t,K)&=&\left( {\partial_t  F_\mu(t,K) -{1 \over 2}\partial_K^2 F_\mu(t,K) \over \partial_K F_\mu(t,K)}\right)\partial_K F(t,K)+{1 \over 2} \partial^2_K  F(t,K)  \eeaa We conclude by the uniqueness of this linear PDE that the solution is $F=F_\mu$ and therefore
$X_t \overset{\PP}{\sim} \mu_t$ for all $t\in \RR_+$.
\end{proof}

\begin{proof}[Proof of Theorem \ref{thm2}]

\no \no Let us consider the following stochastic control problem:
\beaa P_1:=\inf_{\PP \in {\cal M}_\mathrm{mart}(\mu_1)} H(\PP|\PP^0)  \label{primaln} \eeaa
where ${\cal M}_\mathrm{mart}(\mu_1):=\{ \PP \sim  \PP^0 \;: \; S_t \; \PP-\mathrm{martingale}, \quad S_{t_1} \overset{\PP}{\sim} \mu_1 \}$.  For all  $\PP \sim \PP^0$ (not necessary martingale measure here), we have
\beaa dS_t&=&S_t a_t (dW_t+\lambda_t^1 dt)\\
da_t&=&b(a_t) dt + \sigma(a_t) \left( \rho  (dW_t+\lambda_t^1 dt)+  \sqrt{1-\rho^2} (dW^\perp_t+\lambda_t^2 dt) \right),  \quad d\langle W, W^\perp \rangle_t=0\eeaa where $\lambda^{1,2}_t$ are arbitrary adapted processes in $\cal A$. For convenience, we have decomposed $Z$ and $W$ into two uncorrelated Brownian motions $W$ and $W^\perp$. From the Girsanov theorem, we have
\beaa {d\PP^0 \over d\PP}|_{{\cal F}_t}=e^{-\int_0^t \lambda_s^{(1)} dW_s-{1 \over 2} \int_0^t
(\lambda_s^{(1)})^2 ds}e^{-\int_0^t \lambda_s^{(2)} dW^\perp_s-{1 \over 2} \int_0^t
(\lambda_s^{(2)})^2 ds} \eeaa This implies that
\beaa H(\PP|\PP^0) = {1 \over 2} \sum_{i=1}^2 \EE^\PP[\int_0^{{t_1}}
(\lambda_s^{(i)})^2]ds \eeaa Following closely the proof of Proposition \ref{thm1}, we can dualize  $P_1$ into
\beaa P_1&=&\sup_{f_1 \in \mathrm{L}^1(\mu_1), \Delta_s \in {\cal A} }  \{ -\EE^{ \mu_1}[f_1]+\inf_{\lambda_s \in {\cal A}}
{1 \over 2} \sum_{i=1}^2 \EE^\PP[\int_0^{{t_1}}
(\lambda_s^{(i)})^2ds +f_1(S_{t_1})+\int_0^{t_1} \Delta_s dS_s]   \}  \eeaa
Note that the martingale condition has been imposed by introducing the It\^o integral $\int_0^{t_1}  \Delta_s dS_s$ with $\Delta  \in {\cal A}$, the space of adapted process in $\mathrm{L}^2(\PP)$, for which
\beaa \EE^{\PP}[ \int_0^{t_1}  \Delta_s dS_s]=0 \eeaa if and only if $S_t$ is a $\PP$-martingale. Finally, $P_1$ can be written as
\beaa P_1&=&\sup_{f_1 \in \mathrm{L}^1(\mu_1)}\{ -\EE^{ \mu_1}[f_1]+u(0,S_0,a_0)  \} \eeaa
 where $u(t,s,a):=\inf_{\lambda_s \in {\cal A}}
{1 \over 2} \sum_{i=1}^2 \EE^\PP[\int_t^{{t_1}}
(\lambda_s^{(i)})^2ds +f_1(S_{t_1})+\int_t^{t_1} \Delta_s dS_s|S_t=s,a_t=a]$ is the solution of the HJB PDE:
\beaa \partial_t u+ {\cal L}^0 u  +\inf_{\lambda_1,\lambda_2 \in \RR^2} \sup_{\Delta \in \RR} \{ \sum_{i=1}^2 {(\lambda^{(i)})^2 \over 2}
+\lambda^1 (s a \partial_s u+\rho \sigma(a) \partial_a u+sa \Delta)
+\lambda^2 \sqrt{1-\rho^2} \sigma(a) \partial_a u
\}=0 \\ \quad u(t_1,s,a)=f_1(s) \eeaa
and where ${\cal L}^0:={1 \over 2}s^2 a^2 \partial_{ss}+{1 \over 2}\sigma(a)^2 \partial_{aa} + \rho \sigma(a) a s \partial_{as}+b(a)\partial_a $ is the It\^o generator of the process $(S_t,a_t)$ under $\PP^0$. By taking the infimum over $\lambda^{(i)}$, we get the Burgers-like equation:
\beaa \partial_t u + {\cal L}^0u  - {1 \over 2 }
\left( (\sigma(a) \partial_a u)^2+(s a \partial_s u)^2 +2 \rho \sigma(a) s a \partial_{s}u \partial_a u\right) \\+
\sup_{\Delta \in \RR} \{ -{(sa \Delta)^2 \over 2}-s a \Delta
\left(s a  \partial_su +\rho \sigma(a) \partial_a u \right) \} =0 \eeaa
and the optimal controls are \beaa
(\lambda^*)^{(1)}_t&=& -\left(S_t a_t \partial_s +\rho \sigma(a_t) \partial_a \right)u(t,S_t,a_t)-S_t a_t \Delta_t\\
(\lambda^*)^{(2)}_t&=&-\sqrt{1-\rho^2}\sigma(a_t)\partial_a u(t,S_t,a_t) \eeaa
\no Note that if we set $u:=-\ln U$, then
\beaa \partial_t U + {\cal L}^0U +
\sup_{\Delta \in \RR}  \{ {(sa \Delta)^2 \over 2}U-s a \Delta
\left(s a  \partial_sU +\rho \sigma(a) \partial_a U \right) \} =0 \eeaa
\no By taking the supremum over $\Delta$, on gets \beaa
\Delta_t^*=-\left( \partial_s +\rho {\sigma(a_t) \over S_t a_t} \partial_a \right)u \eeaa and
$ (\lambda_t^*)^{(1)}=0 $ as expected as $S_t$ should be a martingale.  Then, $u$ is solution of the Burgers-like PDE:
\bea \partial_t u + {\cal L}^0u  - {1 \over 2 }(1-\rho^2) (\sigma(a) \partial_a u)^2=0, \quad u(t_1,s,a)=f_1(s) \label{Burgersa} \eea
\no The solution $u$ of (\ref{Burgersa}) could then be written back as a stochastic control problem:
\beaa u(t,s,a)=\sup_{\Delta_s \in {\cal A}} - \ln \EE^{\PP^0}[ e^{-f_1(S_{t_1}) -\int_t^{t_1} \Delta_s dS_s} | S_t=s,a_t=a] \eeaa
Note that this implies that the Radon-Nikodym derivative ${d\PP \over d\PP^0}|_{{\cal F}_t}$ is
\beaa {d\PP \over d\PP^0}|_{{\cal F}_t}&=&{e^{-u(t,S_t,a_t)-\int_0^t \Delta_s^* dS_s} \over e^{-u(0,S_0,a_0)}} \\
&=&{\EE^{\PP^0}[ e^{-f_1(S_{t_1}) -\int_0^{t_1} \Delta^*_s dS_s} | S_t,a_t] \over \EE^{\PP^0}[ e^{-f_1(S_{t_1}) -\int_0^{t_1} \Delta_s dS_s}]}e^{-\int_0^t \Delta_s^* dS_s}  \eeaa
\no $P_1$ can then be written as
\beaa P_1=\sup_{f_1 \in \mathrm{L}^1(\mu_1), \Delta_s \in {\cal A}}\{ -\EE^{ \mu_1}[f_1]
-\ln \EE^{\PP^0}[ e^{-f_1(S_{t_1})-\int_0^ {t_1} \Delta_s dS_s} ]   \} \eeaa  and
\beaa dS_t&=&S_t a_t dW_t\\
da_t&=&\left( b(a_t) +(1-\rho^2)\sigma(a_t)^2\partial_a  \ln \EE^{\PP^0}[ e^{-f_1(S_{t_1}) -\int_t^{t_1} \Delta^*_s dS_s} | S_t,a_t] \right)dt + \sigma(a_t) dZ_t \eeaa
\end{proof}

\begin{proof}[Proof of Theorem \ref{thm2bis}]
\no

\no  Let us consider the following stochastic control problem:
\bea P_2:=\inf_{\PP \in {\cal M}_\mathrm{mart}(\mu_1,\mu_2) } H(\PP|\PP^0)  \label{primaln2} \eea
where ${\cal M}_\mathrm{mart}(\mu_1,\mu_2):=\{ \PP \sim  \PP^0 \;: \; \; S_t \; \PP-\mathrm{martingale}, \quad S_{t_1} \sim \mu_1, S_{t_2} \sim \mu_2\}$.  Following closely the proof of Theorem \ref{thm2}, we obtain that $P_2$ can then be written as
\beaa P_2=\sup_{ f_1 \in \mathrm{L}^1(\mu_1), f_2 \in \mathrm{L}^1(\mu_2), (\Delta_s)_{s \in (0,t_2)} \in {\cal A}}\{-\EE^{ \mu_1}[f_1] -\EE^{ \mu_2}[f_2]
-\ln \EE^{\PP^0}[ e^{-f_1(S_{t_1})-f_2(S_{t_2})-\int_{0}^{t_2} \Delta_s dS_s} ]   \} \eeaa
\end{proof}

\begin{proof}[Proof of Theorem \ref{thmvix}]
\no

\no  Let us consider the following stochastic control problem:
\bea P_3:=\inf_{\PP \in {\cal M}_\mathrm{mart}(\mu_1,\mu_2,\mu_\mathrm{VIX}) } H(\PP|\PP^0)  \eea
where ${\cal M}_\mathrm{mart}(\mu_1,\mu_2):=\{ \PP \sim  \PP^0 \;: \; \; S_t \; \PP-\mathrm{martingale}, \quad S_{t_1} \sim \mu_1,\quad S_{t_2} \sim \mu_2, \quad \mathrm{VIX}_{t_1}^2 \sim \mu_\mathrm{VIX}
\}$.  The program $P_3$ can be dualized into
\bea P_3&:=&
\sup_{ f_i \in \mathrm{L}^1(\mu_i), (\Delta_s)_{s \in (0,t_2)} \in {\cal A}, \Delta_\mathrm{VIX} \in C^0(\RR_+^3),  f_{\mathrm{VIX}} \in
\mathrm{L}^1(\mu_\mathrm{VIX})}
\inf_{\PP \sim \PP^0} H(\PP|\PP^0) \\
&+&\EE^\PP[\int_0^{t_2} \Delta_s dS_s]+\sum_{i=1}^2 \EE^{\PP}[ f_i] - \EE^\mu[f_i]\\
&+&\EE^{\PP}[ \Delta_\mathrm{VIX}(S_{t_1},a_{t_1},X)\left( {2 \over t_2-t_1}\ln {S_{t_2} \over S_{t_1}}+X^2 \right)]
\\
&+&\EE^{\PP}[ f_{\mathrm{VIX}}(X^2)] -\EE^{\mu_{\mathrm{VIX}}}[ f_{\mathrm{VIX}}]
 \eea

\no Following closely the proof of Theorem \ref{thm2}, we obtain that $P_3$ can then be written as
\beaa P_3&=&\sup_{ f_i \in \mathrm{L}^1(\mu_i), (\Delta_s)_{s \in (0,t_2)}, \Delta_\mathrm{VIX} \in C^0(\RR_+^2),  f_{\mathrm{VIX}} }\{-\sum_{i=1}^2
\EE^{ \mu_i}[f_i] - \EE^{\mu_{\mathrm{VIX}}}[ f_{\mathrm{VIX}}] \\
&&-\ln \EE^{\PP^0}[ e^{-\sum_{i=1}^2  f_i(S_{t_i})- f_\mathrm{VIX}(X^2)
-\int_{0}^{t_2} \Delta_s dS_s-\Delta_\mathrm{VIX}(S_{t_1},a_{t_1},X)\left( {2 \over t_2-t_1}\ln {S_{t_2} \over S_{t_1}}+X^2 \right)
   }
]   \} \eeaa
\end{proof}

\end{document}